\documentclass[5p,final]{elsarticle}

\usepackage[utf8]{inputenc}
\usepackage[T1]{fontenc}
\usepackage[english]{babel}
\usepackage{amsmath,amssymb,amsfonts,bm}
\usepackage{amsthm}
\usepackage{graphicx}
\usepackage{booktabs}
\usepackage{placeins}
\usepackage[colorlinks=true,linkcolor=blue,citecolor=blue,urlcolor=blue]{hyperref}

\newcommand{\FigDir}{figures}
\graphicspath{{\FigDir/}}

\newcommand{\E}{\mathbb{E}}
\newcommand{\Q}{\mathbb{Q}}
\newcommand{\M}{\mathbb{M}}
\newcommand{\Var}{\operatorname{Var}}
\newcommand{\dd}{\,\mathrm{d}}
\newcommand{\ii}{\mathrm{i}}
\newcommand{\wh}[1]{\widehat{#1}}
\newcommand{\cL}{\mathcal{L}}
\newcommand{\sigloc}{\sigma_{\mathrm{loc}}}
\newcommand{\Aeff}{\mathcal{A}_{\mathrm{eff}}}
\newcommand{\Fcal}{\mathcal{F}}
\newcommand{\Xiop}{\Xi}
\newcommand{\V}{\mathcal{V}}
\newcommand{\Ad}{\operatorname{Ad}}
\newcommand{\Pair}[2]{\left\langle #1,#2\right\rangle}
\newcommand{\Bop}{\mathcal{B}}
\newcommand{\Gop}{\mathcal{G}}
\newcommand{\Pproj}{\mathcal{P}}
\newcommand{\Kern}{\mathcal{K}}
\newcommand{\Sdom}{\mathcal S}
\newcommand{\Dclaim}{\mathcal{D}_{\!B}}
\newcommand{\Ddens}{\mathcal{D}_{\!G}}
\newcommand{\Ical}{\mathcal{I}}

\newcommand{\PairS}[2]{\left\langle #1,#2\right\rangle_{\Sdom}}

\theoremstyle{definition}
\newtheorem{definition}{Definition}
\theoremstyle{plain}
\newtheorem{proposition}{Proposition}
\newtheorem{theorem}{Theorem}
\theoremstyle{remark}
\newtheorem{remark}{Remark}

\begin{document}
\begin{frontmatter}

\title{Reaction-boundary variance and adjoint-consistent local-volatility projection}

\author[unsw-mathstats]{Chris Angstmann}\ead{c.angstmann@unsw.edu.au}
\address[unsw-mathstats]{School of Mathematics and Statistics, University of New South Wales, Sydney, NSW 2052, Australia}
\author[uct-sta]{Tim Gebbie} \ead{tim.gebbie@uct.ac.za}
\address[uct-sta]{Department of Statistical Sciences, University of Cape Town, Rondebosch 7701, Western Cape, South Africa}
\begin{abstract}
We derive an operational-time variance kernel for a latent-order-book reaction boundary and use it to separate three objects usually collapsed in calendar-time volatility models: a structural boundary cumulant, a clock projection, and a pricing-measure choice. The reaction boundary is the zero of a bid--ask imbalance field. For a locally linear book, signed order-flow perturbations displace this zero through a damped Abel response kernel, so the variance of boundary increments is obtained as a finite-scale Green-function cumulant rather than introduced as a primitive diffusion coefficient. For long-memory forcing with exponent $0<\gamma<1$, the operational variance has a closed asymptotic form involving effective signed-forcing intensity, liquidity slope, resilience, memory, and operational coarse-graining scale. A deterministic activity clock gives the benchmark local-volatility projection. More general, non-unique clocks generate candidate calendar-time pricing systems. We argue that such projections are admissible only when the induced forward density operator and backward valuation operator remain adjoint on the same state space. Adjoint consistency is therefore a reality constraint on operational-to-calendar time projection: it disciplines non-unique time and identifies where incompleteness enters.
\end{abstract}

\begin{keyword}
operational time \sep reaction boundary \sep local volatility \sep non-unique time \sep adjoint operators \sep market incompleteness \sep latent order book
\MSC[2020] 91G20 \sep 91G80 \sep 60G44 \sep 60H30 \sep 82C41
\end{keyword}

\end{frontmatter}
\section{Introduction}

The motivating financial question is whether the local variance entering a calendar-time option-pricing model can be related to fluctuations of the price interface at which local bid and ask liquidity balance, and, if so, how that variance should be carried from market activity time into calendar time.

Financial markets need not evolve uniformly in calendar time. Trading activity arrives through irregular sequences of orders, trades, quotes and cancellations, so it is often useful to describe price formation first on an activity, or operational, clock and only afterwards relate that evolution to calendar time \cite{Clark1973,CarrWu2004,AngstmannGebbie2026NonUniqueTime}. In the present paper, operational time $u$ indexes the local market activity on which the reaction-boundary dynamics evolve, whereas calendar time $t$ carries the maturity, discounting and contractual timing of the option-pricing problem. A deterministic activity clock $u=U(t)$ provides the simplest link between the two; more general clock choices raise the projection-consistency question considered later in the paper.

At the microstructure level we begin with market objects that generate price changes rather than from a price diffusion itself. A latent order book represents buy- and sell-side liquidity beyond the orders currently displayed in the visible book, and the local signed imbalance between these two sides provides a convenient state variable for price formation \cite{bouchaud2009markets,toth2011anomalous,Mastromatteo2014,Donier2015}. In this representation price is associated with the moving interface at which local bid and ask liquidity balance. The calculation below then asks how fluctuations of this interface generate a finite-scale variance before any calendar-time pricing model is imposed.

Volatility is often introduced at the level of a calendar-time diffusion, either postulated directly as in Black--Scholes--Merton or inferred from option prices in local-volatility models \cite{BlackScholes1973,Merton1973,Dupire1994,Gatheral2006}. The inference viewpoint is also linked to state-price recovery from option prices, where cross-strike curvature carries information about risk-neutral densities \cite{BreedenLitzenberger1978}. Here we reverse this order. We first compute a mesoscopic transport coefficient for a reaction boundary in a latent order book. Here the mesoscopic transport coefficient is the variance-rate coefficient that remains local in state and operational time but is resolved over a non-zero operational window rather than identified with an instantaneous microscopic diffusion coefficient \cite{AngstmannGebbieBSMFromLattice2026}. Only after this operational-time coefficient is obtained do we discuss its projection into calendar time.

The reaction boundary is the log-price location at which a local bid--ask imbalance field changes sign. For a locally linear latent book, small imbalance perturbations are converted into boundary displacements by the inverse local book slope. Signed order-flow perturbations are propagated to the boundary by the same damped Abel response kernel that appears in latent-liquidity accounts of concave market impact \cite{bouchaud2009markets,toth2011anomalous,Mastromatteo2014,Donier2015}. However, the object computed here is not the mean impact path; it is the second cumulant of the boundary displacement over a finite operational scale.

Concretely, let $S(u)$ denote the log-price value of the reaction boundary in operational time $u$. The formal infinitesimal target for the operational variance of the reaction boundary is then:
\begin{equation}
 a_u(S,u)=\lim_{h\downarrow0}\frac{1}{h}
 \E\left[(S(u+h)-S(u))^2\mid S(u)=S\right].
 \label{eq:au-target}
\end{equation}
This fixes the transport coefficient that would appear in a local diffusion closure. In the Green-function calculation below the Abel kernel is singular at the origin, and the operational variance is therefore realised as a regularised, coarse-grained coefficient $a_u^{(\Delta)}(S,u)$ over a mesoscopic operational scale $\Delta$. The limiting notation in Eq.~\eqref{eq:au-target} should thus be read as the local target of the closure, while the object carried into the spectral calculation and clock projection is the finite-scale operational cumulant.

Here coarse-grained means that the variance is measured over a non-zero operational window $\Delta$, with fluctuations below that scale represented through the finite-scale coefficient. The resulting mesoscopic closure is the finite-scale reduction of the detailed reaction-boundary dynamics to $a_u^{(\Delta)}(S,u)$, rather than an assertion that the microscopic boundary is itself a calendar-time diffusion \cite{AngstmannGebbieBSMFromLattice2026}.

Concretely, for a deterministic activity clock $u=U(t)$, the calendar-time local variance used in the mesoscopic closure is obtained by multiplying the operational variance by the activity rate,
\begin{equation}
 \sigloc^2(S,t)=\alpha(t)a_u^{(\Delta)}(S,U(t)),
 \qquad \alpha(t)=\frac{\dd U(t)}{\dd t}.
 \label{eq:clock-map}
\end{equation}
When the scale $\Delta$ is suppressed, Eq.~\eqref{eq:clock-map} reduces to the usual shorthand $\sigloc^2=\alpha a_u$. Thus the local-volatility coefficient is not assumed as a primitive property of the traded price. It is an activity-rescaled boundary cumulant: the order-book response determines the operational variance, while the clock determines how that operational variance is sampled in calendar time.

This construction separates three layers that are often collapsed into a single volatility coefficient. The first layer is the local reaction-boundary response of the latent book. The second layer is the cumulant calculation for long-memory signed forcing in operational time. The third layer is the projection from operational time to calendar time. In this respect the paper is close to the subordination and time-change tradition, but differs by leaving the clock as an admissibility problem rather than specifying it at the outset \cite{Bochner1949,Clark1973,CarrWu2004}. A deterministic clock gives the benchmark local-volatility equation. A non-unique clock leaves the operational cumulant unchanged but changes the projection, so that incompleteness or clock uncertainty enters at the time-change layer rather than inside the Green-function kernel itself \cite{AngstmannGebbie2026NonUniqueTime}.

In summary: the purpose here is to derive a finite-scale operational variance coefficient for a reaction-boundary price process and to identify the conditions under which its projection into calendar time defines a local-volatility pricing model. The construction combines latent-order-book response, long-memory signed forcing, and forward--backward adjoint consistency. The result is a structural decomposition of local volatility into boundary variance, activity, and pricing-measure choice.

\section{Main results and contribution}

The attack on the problem starts with a locally linear reaction boundary: the bid--ask imbalance field has a simple zero and, close to that zero, the latent book is represented by a positive local liquidity slope $\cL_u>0$ \cite{bouchaud2009markets,toth2011anomalous,Mastromatteo2014,Donier2015}. We then treat small boundary displacements to first order over a local response window. Over this calculation the book slope, replenishment/diffusion scale, resilience and boundary position are held fixed, with their state and operational-time dependence restored afterwards \cite{Mastromatteo2014,Donier2015,Benzaquen2018}. This is the local response approximation used to construct the boundary displacement before its variance is calculated (see Sections~\ref{sec:reaction-boundary-response}--\ref{sec:green-function-cumulant}).

The forcing is taken to be locally stationary with long memory, $0<\gamma<1$, and with a finite short-scale cutoff $\tau_0>0$, so the variance is kept at a non-zero operational scale $\Delta$ rather than identified with an unregularised microscopic limit \cite{LilloMikeFarmer2005,Beran1994,Priestley1981}. The exact finite-cutoff calculation is followed by the separated-scale approximation $\Delta/\tau_0\gg1$, with the additional high-resilience zero-cutoff restriction $\nu\tau_0\ll1$ (see Sections~\ref{sec:green-function-cumulant}--\ref{sec:asymptotic-closure} and \ref{app:finite-cutoff}). To pass from this operational variance to the benchmark calendar-time pricing problem we take $u=U(t)$ to be absolutely continuous, with activity rate $\alpha(t)=\dd U(t)/\dd t$, and use the risk-neutral projected kernel in the usual no-arbitrage setting \cite{Clark1973,CarrWu2004,HarrisonPliska1981} (see Section~\ref{sec:clock-projection-pricing}). If the operational-to-calendar projection is instead non-unique, the backward valuation and forward state-price operators must refer to the same state space and information set and satisfy the adjoint consistency condition \cite{EthierKurtz1986,Dynkin1965MarkovProcessesI,Feller1971Volume2} (see Section~\ref{sec:adjoint-real} and \ref{app:adjoint-consistency}).

The paper has three linked results. The first is the operational reaction-boundary variance kernel. Under the local assumptions stated above, the finite-scale operational variance rate satisfies
\begin{equation}
 a_u^{(\Delta)}(S,u)
 \simeq
 \left[\frac{\Aeff(S,u)}{\cL_u(S,u)^2}\right]
 \Delta^{-\gamma(S,u)}
 \Fcal_{\gamma(S,u)}(\nu_u(S,u)\Delta),
 \label{eq:main-operational-result}
\end{equation}
where $\Aeff=A_m/D_u$ and $\Fcal_\gamma$ is the dimensionless response function in Eq.~\eqref{eq:F-definition}. This is a mesoscopic transport coefficient: it is local in $(S,u)$ but resolved at operational scale $\Delta$.

The second result is the deterministic-clock projection. If $u=U(t)$ is absolutely continuous with activity rate $\alpha(t)=\dd U/\dd t$, then the corresponding calendar-time local variance is
\begin{equation}
 \sigloc^2(S,t)=\alpha(t)\Xiop(S,U(t);\Delta),
 \label{eq:main-clock-result}
\end{equation}
where $\Xiop$ denotes the operational kernel in Eq.~\eqref{eq:Xi}. Under a chosen pricing measure this gives the benchmark local-volatility PDE in Eq.~\eqref{eq:pde}.

The third result is a consistency criterion for non-unique time. If the clock is drawn from a family $\mathfrak U$, a projection of $\Xiop$ into calendar time is not automatically a pricing model. It is a pricing model only when the projected backward valuation operator and the projected forward density operator are adjoint on the same state space and information set. This is the adjoint-reality constraint developed in Section~\ref{sec:adjoint-real}. It is the step that distinguishes a coherent calendar-time representation from a formal variance projection.

These three results are then drawn together into the overall contribution, which is to provide a structural route from latent-order-book reaction-boundary fluctuations to an admissible calendar-time local-volatility pricing representation, with the operational variance derived first and non-uniqueness or incompleteness entering through the subsequent clock and pricing-measure projection.

\section{Reaction-boundary response}
\label{sec:reaction-boundary-response}
Let $\Phi(x,u)$ be the bid--ask imbalance field. Here $\Phi(x,u)$ is the signed local difference between bid-side and ask-side latent liquidity at log-price coordinate $x$ and operational time $u$ \cite{bouchaud2009markets,toth2011anomalous,Mastromatteo2014,Donier2015}. The reaction boundary is defined by
\begin{equation}
 \Phi(y(u),u)=0.
 \label{eq:zero-boundary}
\end{equation}
We write $y(u)$ for the zero of the imbalance field and $S(u)$ for the corresponding log-price state used in the variance and pricing functions.

Near a simple zero, assume a locally linear latent book \cite{bouchaud2009markets,toth2011anomalous,Donier2015},
\begin{equation}
 \Phi^*(x,u)\simeq-\cL_u(x-y(u)),
  \label{eq:linear-book}
 \end{equation}
 where $\cL_u=-\partial_x\Phi^*(y(u),u)>0$.
 
A perturbation $\Psi$ obeys the frozen-coefficient response equation
\begin{equation}
 \partial_u\Psi=D_u\partial_{xx}\Psi-\nu_u\Psi+m(u)\delta(x-y(u)),
 \label{eq:response-pde}
\end{equation}
where $D_u$ is replenishment/diffusion, $\nu_u$ resilience, and $m(u)$ signed forcing. The associated boundary kernel is
\begin{equation}
 g_{\nu,D}(\tau)=\frac{e^{-\nu\tau}}{\sqrt{4\pi D\tau}},\qquad \tau>0.
 \label{eq:abel-kernel}
\end{equation}
The boundary displacement is therefore, to first order,
\begin{equation}
 Y(u)=y(u)-y_0\simeq\frac{1}{\cL_u}\int_0^u g_{\nu,D}(u-s)m(s)\dd s.
 \label{eq:linear-response}
\end{equation}
This formula is local in the usual response-theory sense. The slope $\cL_u$, diffusion scale $D_u$, resilience $\nu_u$, and boundary location are frozen over the response window used to compute the increment; their dependence on state and operational time is restored in the local closure below. Physically, the book slope converts imbalance into displacement, while the damped Abel kernel describes how past signed forcing is transmitted to the present boundary. This is the same local response mechanism that underlies square-root impact in latent-liquidity models \cite{Mastromatteo2014,Donier2015,Benzaquen2018}, now applied to a second cumulant rather than only to mean impact. Dependence on $(S,u)$ is restored after the local cumulant is computed.

\section{Green-function cumulant}
\label{sec:green-function-cumulant}
The Abel kernel is singular at zero, so the operational variance used in the closure is regularised at a mesoscopic scale $\Delta$, with $Y$ defined in Eq.~\eqref{eq:linear-response}:
\begin{equation}
 a_u^{(\Delta)}(S,u)=\frac{1}{\Delta}
 \Var\left[Y(u+\Delta)-Y(u)\mid S(u)=S\right].
 \label{eq:coarse-rate}
\end{equation}
This is the finite-scale transport coefficient carried through the remainder of the calculation. It is local in $(S,u)$, but it is not an unregularised microscopic limit; the scale $\Delta$ is part of the mesoscopic description.
By an unregularised microscopic limit we mean the formal instantaneous variance-rate limit obtained by sending the measurement scale to zero without retaining the finite operational window $\Delta$ or the short-scale regularisation; that limit is not the object used here \cite{AngstmannGebbieBSMFromLattice2026}.
Use the regularised kernel
\begin{equation}
 g_{\nu,D}^{(\tau_0)}(r)=
 \frac{\exp[-\nu(r+\tau_0)]}{\sqrt{4\pi D(r+\tau_0)}}\mathbf{1}_{r\ge0}
 \label{eq:regularized-kernel}
\end{equation}
and a locally stationary long-memory forcing covariance
\begin{equation}
 C_m(\tau;S,u)=A_m(S,u)(|\tau|+\tau_0)^{-\gamma(S,u)},
 \label{eq:forcing-cov}
\end{equation}
where $ 0<\gamma<1$.
The sign-memory exponent may be interpreted through the hidden-order splitting mechanism of Lillo, Mike and Farmer \cite{LilloMikeFarmer2005}. 

With the Fourier convention 
\begin{equation}
    C_m(\tau)=(2\pi)^{-1}\int e^{\ii\omega\tau}S_m(\omega)\dd\omega,
\end{equation}
the exact coarse-grained spectral form is
\begin{equation}
 a_u^{(\Delta)}=\frac{1}{\Delta\cL_u^2}\frac{1}{2\pi}
 \int_{-\infty}^{\infty}4\sin^2\left(\frac{\omega\Delta}{2}\right)
 \left|\wh g_{\nu,D}^{(\tau_0)}(\omega)\right|^2S_m(\omega)\dd\omega.
 \label{eq:spectral-cumulant}
\end{equation}
This is the standard Wiener--Khinchin/filtering representation for the variance of a linear filter of a locally stationary forcing process. Here the increment
\(Y(u+\Delta)-Y(u)\) from Eq.~\eqref{eq:coarse-rate} contributes the transfer function \(e^{i\omega\Delta}-1\), whose squared modulus is \(4\sin^2(\omega\Delta/2)\); the boundary response contributes \(|\widehat g(\omega)|^2\); and the forcing contributes the spectrum \(S_m(\omega)\) \cite{Priestley1981}. The formula is exact for the frozen local coefficients and the regularised kernel. The subsequent approximation enters only when the finite-cutoff spectrum and finite-cutoff response are replaced by their low-frequency, zero-cutoff asymptotic forms.

\section{Asymptotic closure}
\label{sec:asymptotic-closure}

For $\Delta/\tau_0\gg1$, the low-frequency forcing spectrum is
\begin{equation}
 \frac{S_m(\omega)}{A_m}\sim C_\gamma |\omega|^{\gamma-1},
 \qquad
 C_\gamma=2\Gamma(1-\gamma)\sin\left(\frac{\pi\gamma}{2}\right).
 \label{eq:spectrum-asymp}
\end{equation}
Here low frequency refers to the small-$|\omega|$ behaviour of the spectral density $S_m(\omega)$, so Eq.~\eqref{eq:spectrum-asymp} is understood asymptotically as $|\omega|\to0$ \cite{Beran1994,Priestley1981}.
The microstructural motivation for the exponent is order splitting \cite{LilloMikeFarmer2005}, while the transform itself is the usual long-memory spectral calculation \cite{Beran1994}.
The zero-cutoff kernel from Eq.~\eqref{eq:spectral-cumulant} has
\begin{equation}
 \wh g_{\nu,D}(\omega)=\frac{1}{2\sqrt{D}}(\nu+
 i\omega)^{-1/2},
 \end{equation}
such that the power spectrum is
\begin{equation}
 |\wh g_{\nu,D}|^2=\frac{1}{4D\sqrt{\nu^2+\omega^2}}.
 \label{eq:g-transform}
\end{equation}
The transform follows from the Gamma-integral/Laplace transform of the
Abel kernel $t^{-1/2}\mathbf 1_{t>0}$ (or equivalently the Riemann--Liouville fractional-integral kernel), placing the response calculation in the same analytical family as fractional and long-memory filtering methods \cite{MillerRoss1993,Beran1994,Priestley1981}.
This gives the asymptotic closure
\begin{equation}
 a_u^{(\Delta)}(S,u)
 \simeq
 \left[\frac{A_m(S,u)}{\cL_u(S,u)^2D_u(S,u)}\right]
 \Delta^{-\gamma(S,u)}
 \Fcal_{\gamma(S,u)}(\nu_u(S,u)\Delta).
 \label{eq:closure-before-Aeff}
\end{equation}
The powers and constants in Eq.~\eqref{eq:closure-before-Aeff} follow directly from substituting Eqs.~\eqref{eq:spectrum-asymp} and \eqref{eq:g-transform} into Eq.~\eqref{eq:spectral-cumulant} and setting $\xi=\omega\Delta$. The factor $|\omega|^{\gamma-1}$ contributes $\Delta^{1-\gamma}$, the response denominator contributes
\[
 \frac{\Delta}{\sqrt{(\nu\Delta)^2+\xi^2}},
\]
and the outside rate normalisation contributes $1/\Delta$. The net scale is therefore $\Delta^{-\gamma}$, with the remaining dimensionless integral collected into $\Fcal_\gamma(\nu\Delta)$.
Defining the effective signed-forcing intensity
\begin{equation}
 \Aeff(S,u)=\frac{A_m(S,u)}{D_u(S,u)},
 \label{eq:Aeff}
\end{equation}
one obtains the local mesoscopic, Markovian reduced-form operational-variance closure
\begin{equation}
 a_u^{(\Delta)}(S,u)
 \simeq
 \left[\frac{\Aeff(S,u)}{\cL_u(S,u)^2}\right]
 \Delta^{-\gamma(S,u)}
 \Fcal_{\gamma(S,u)}(\nu_u(S,u)\Delta).
 \label{eq:operational-result}
\end{equation}
The bracketed factor makes the inverse-liquidity dependence explicit: stronger effective signed forcing increases the boundary variance, while a steeper local book suppresses boundary motion quadratically. The factor $\Delta^{-\gamma}$ carries the coarse-graining dependence induced by long-memory signed order flow, and the dimensionless argument $\nu_u\Delta$ compares the resilience time scale of the book with the operational scale over which the boundary increment is measured.

With the normalization used in Eq.~\eqref{eq:operational-result}, the response function may be written as the dimensionless spectral integral
\begin{equation}
 \Fcal_\gamma(z)=
 \frac{C_\gamma}{2\pi}
 \int_{-\infty}^{\infty}
 \frac{\sin^2(\xi/2)|\xi|^{\gamma-1}}
 {\sqrt{z^2+\xi^2}}\,\dd\xi,
 \qquad 0<\gamma<1,
 \label{eq:F-definition}
\end{equation}
up to the same zero-cutoff asymptotic normalization used in passing from Eq.~\eqref{eq:spectral-cumulant} to Eq.~\eqref{eq:closure-before-Aeff}. The response function is not fitted; it is constrained by asymptotics:
\begin{equation}
 \Fcal_\gamma(z)\sim
 \begin{cases}
 \displaystyle \frac{\tan(\pi\gamma/2)}{2(1-\gamma)},& z\ll1,\\[1.2em]
 \displaystyle K_\gamma^{(0)}z^{\gamma-1},& z\gg1,
 \end{cases}
 \label{eq:F-asymp}
\end{equation}
where
\begin{equation}
 K_\gamma^{(0)}=\frac{\Gamma(1-\gamma)\sin(\pi\gamma/2)\Gamma(\gamma/2)\Gamma((1-\gamma)/2)}{2\pi^{3/2}}.
 \label{eq:Kgamma}
\end{equation}
The two limits in Eq.~\eqref{eq:F-asymp} follow from beta/gamma integral identities and asymptotic scaling of oscillatory integrals: for \(z\ll1\) the denominator is asymptotically \(|\xi|\), while for
\(z\gg1\) the scaling \(\xi=zy\) and oscillatory averaging of
\(\sin^2(zy/2)\) gives the displayed coefficient
\cite{GradshteynRyzhik,Wong2001}. The high-resilience prefactor is the zero-cutoff long-memory value; finite cutoff corrections can modify the prefactor but not the scaling.

\section{Clock projection and pricing}
\label{sec:clock-projection-pricing}

Let
\begin{equation}
 \Xiop(S,u;\Delta)=
 \left[\frac{\Aeff(S,u)}{\cL_u(S,u)^2}\right]
 \Delta^{-\gamma(S,u)}
 \Fcal_{\gamma(S,u)}(\nu_u(S,u)\Delta).
 \label{eq:Xi}
\end{equation}
This is the operational-time variance kernel obtained from the reaction-boundary cumulant. It is still an order-book object: it depends on signed-forcing intensity, local liquidity, resilience, memory, and the operational scale $\Delta$. It becomes a calendar-time local variance only after a clock has been chosen.

For a deterministic activity clock $u=U(t)$, the benchmark projection is
\begin{equation}
\sigloc^2(S,t)=\alpha(t)\Xiop(S,U(t);\Delta).
 \label{eq:deterministic-localvol}
\end{equation}
Equation~\eqref{eq:deterministic-localvol} says that the clock does not change the Green-function cumulant itself. It only determines the rate at which operational boundary variance is sampled in calendar time. Higher activity therefore amplifies calendar-time variance, even when the operational response kernel is unchanged.

Under a risk-neutral measure, in the usual no-arbitrage pricing framework, the corresponding pricing PDE uses the risk-neutral version of the same projected kernel \cite{HarrisonPliska1981,DelbaenSchachermayer1994,Duffie2001}:
\begin{equation}
 V_t+(r-q)xV_x+\frac12\alpha(t)\Xiop^\Q(\log x,U(t);\Delta)x^2V_{xx}-rV=0.
 \label{eq:pde}
\end{equation}
Here $x$ is the traded price, so that the boundary state entering the operational kernel is $S=\log x$. The PDE is therefore a benchmark projection of the mesoscopic reaction-boundary variance into the usual local-volatility pricing form, not an independent postulate of a diffusion coefficient.

In log-price variables, the corresponding projected backward operator has the schematic risk-neutral form
\begin{equation}
 \Bop_t^U f(S)=b^\Q(S,t)\partial_S f(S)+\frac12\alpha(t)\Xiop^\Q(S,U(t);\Delta)\partial_{SS}f(S),
 \label{eq:backward-operator}
\end{equation}
with the usual killing term $-r f$ added for discounted claims. In the pure diffusion benchmark corresponding to Eq.~\eqref{eq:pde},
\[
 b^\Q(S,t)=r-q-\frac12\alpha(t)\Xiop^\Q(S,U(t);\Delta),
\]
subject to the usual convention changes if discounting or numeraires are absorbed into the state-price density. The forward density operator is the formal adjoint
\begin{equation}
 (\Bop_t^U)^\dagger p=-\partial_S\left(b^\Q(S,t)p\right)+\frac12\partial_{SS}\left(\alpha(t)\Xiop^\Q(S,U(t);\Delta)p\right),
 \label{eq:forward-adjoint}
\end{equation}
up to boundary and numeraire conventions. For a deterministic clock the multiplier $\alpha(t)$ is common to both sides, so the forward--backward adjoint structure is inherited from the operational generator. This inheritance is no longer automatic when the projection is stochastic, filtered, or non-unique.

The homogeneous benchmark freezes $\Aeff$, $\cL_u$, $\nu_u$, and $\gamma$:
\begin{equation}
 \sigma_{\rm eff}^2=\alpha_0
 \left[\frac{{\Aeff}_0}{\cL_0^2}\right]
 \Delta^{-\gamma_0}
 \Fcal_{\gamma_0}(\nu_0\Delta).
 \label{eq:homogeneous-result}
\end{equation}
This expression gives the homogeneous closure of the theory. Activity multiplies the operational variance kernel; effective signed forcing supplies boundary fluctuations; the local liquidity slope suppresses those fluctuations quadratically; resilience damps the contribution of persistent forcing; and the memory exponent controls the coarse-graining dependence through $\Delta^{-\gamma_0}$.

The formula is best stated measure-generically. Under an admissible measure $\M$,
\begin{equation}
 \Xiop_0^{\M}(\Delta)=
 \left[\frac{{\Aeff}_0^{\M}}{(\cL_0^{\M})^2}\right]
 \Delta^{-\gamma_0^{\M}}
 \Fcal_{\gamma_0^{\M}}(\nu_0^{\M}\Delta).
 \label{eq:measure-generic}
\end{equation}
The pricing PDE uses the $\Q$ version. This avoids identifying physical order-flow statistics with risk-neutral quantities before specifying a risk-premium model. This is the point at which model uncertainty enters in a way comparable to robust or uncertain-model valuation: the same structural kernel may support different pricing systems once the admissible measure and projection class are varied \cite{DenisMartini2006,Cont2006}. In general,
\begin{equation}
 \Xiop^{\mathbb P}(S,u;\Delta)\ne \Xiop^{\mathbb Q}(S,u;\Delta),
 \label{eq:P-not-Q-kernel}
\end{equation}
because the signed-forcing intensity, liquidity response, resilience, memory exponent, or clock-risk compensation may change under the pricing measure. The derivation supplies the operational functional form; a pricing application must specify, estimate, or constrain the transformation from the physical kernel to the risk-neutral kernel.

\section{Non-unique time and adjoint-real projections}
\label{sec:adjoint-real}
The non-unique-time extension keeps the operational kernel \eqref{eq:Xi} and changes only the projection. For a family of clocks $\mathfrak U$,
\begin{equation}
 \Sigma_{\mathfrak U}^2(S,t)=\mathcal P_{\mathfrak U}[\Xiop(S,u;\Delta);t].
 \label{eq:NUT-projection}
\end{equation}
The operator $\mathcal P_{\mathfrak U}$ denotes a projection from operational to calendar time. For a deterministic absolutely continuous clock it reduces to the multiplier in Eq.~\eqref{eq:deterministic-localvol}, namely $\mathcal P_{\mathfrak U}[\Xiop;t]=\alpha(t)\Xiop(S,U(t);\Delta)$. For a stochastic, filtered, or non-unique clock it represents the corresponding conditional projection over admissible operational-time increments under the chosen pricing measure.

\begin{definition}[Projected pricing pair]
Given an operational kernel $\Xiop(S,u;\Delta)$ and a clock projection $U\in\mathfrak U$, a projected pricing pair is a pair $(\Bop_t^U,\Gop_t^U)$, where $\Bop_t^U$ acts backward on claims and $\Gop_t^U$ acts forward on state laws. The pair is interpreted as a one-state pricing representation only when both operators are defined on compatible domains and are associated with the same projected transition mechanism.
\end{definition}

\begin{definition}[Adjoint-real clock]
A clock projection $U\in\mathfrak U$ is called adjoint-real if its projected pair satisfies
\begin{equation}
 \Pair{p}{\Bop_t^U V}=\Pair{\Gop_t^U p}{V}
 \label{eq:adjoint-real-condition}
\end{equation}
for all admissible densities $p$ and claims $V$, after including the appropriate discounting, boundary, killing, and numeraire terms. The set of adjoint-real clocks is
\begin{equation}
 \Ad(\mathfrak U)=\{U\in\mathfrak U:\; \Gop_t^U=(\Bop_t^U)^\dagger\;\text{on the chosen state space}\}.
 \label{eq:Ad-set}
\end{equation}
\end{definition}

\begin{proposition}[Adjoint consistency as a reality constraint] \label{prop:adjoint-real}
A non-unique operational-to-calendar projection defines a coherent one-state pricing system only if its clock belongs to $\Ad(\mathfrak U)$. A projection that maps the operational variance kernel into a plausible backward pricing coefficient, but does not induce the corresponding adjoint forward law on the same state space and information set, is a formal coefficient projection rather than a pricing model.
\end{proposition}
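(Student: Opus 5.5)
The plan is to prove the proposition by contraposition, after making precise what a ``coherent one-state pricing system'' means. I will take it to mean a single projected transition mechanism on the chosen state space $\Sdom$ with information set fixed by $U$. Concretely, there is a family of transition kernels $P^U_{s,t}(S,\dd S')$. It generates both the discounted valuation $V(s,S)=\int P^U_{s,t}(S,\dd S')V(t,S')$, whose generator is $\Bop_t^U$ with its killing term $-r$, and the evolution of state-price densities $p_t=p_s P^U_{s,t}$, whose generator is $\Gop_t^U$. The price of a claim computed backward must then equal the state-price integral computed forward. In symbols, $\Pair{p_s}{V(s,\cdot)}=\Pair{p_t}{V(t,\cdot)}$ for all $s<t$, all admissible densities $p$ and all claims $V$. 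This ``no two prices for one claim'' identity is the content of coherence, and the proof turns on it.

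The first step is to differentiate this identity in $t$ on the common domain. For claims in the domain of $\Bop_t^U$ and densities in the domain of $\Gop_t^U$, this gives $\Pair{\Gop_t^U p}{V}-\Pair{p}{\Bop_t^U V}=0$, which is exactly Eq.~\eqref{eq:adjoint-real-condition} once discounting, boundary and numeraire terms are included. The second step passes from this weak identity to the operator statement $\Gop_t^U=(\Bop_t^U)^\dagger$ in Eq.~\eqref{eq:Ad-set}. I will do this by density: the admissible claims should form a core for $\Bop_t^U$, for instance smooth compactly supported payoffs on $\Sdom$ together with the boundary conventions. I will justify this with standard Feller semigroup/core arguments \cite{EthierKurtz1986,Dynkin1965MarkovProcessesI}. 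It follows that $U\in\Ad(\mathfrak U)$, which proves the first sentence of the proposition.

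The third step addresses the second sentence. Suppose a projection only produces a backward coefficient $\Bop_t^U$, for example Eq.~\eqref{eq:backward-operator} with $\Sigma^2_{\mathfrak U}$ in place of $\alpha\Xiop^\Q$. Suppose further that the forward law actually induced by the projected clock, $\Gop_t^U$, is not its adjoint on the same space and filtration. Then the identity fails for some pair $(p,V)$. The time derivative $\frac{\dd}{\dd t}\Pair{p_t}{V_t}$ is then nonzero, so the backward price and the state-price integral disagree, and there is no single pricing system. To see that this case genuinely occurs, and to contrast it with the deterministic case, I will compare with Eq.~\eqref{eq:forward-adjoint}. There $\alpha(t)$ factors out of both operators, so adjointness is inherited. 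In a filtered or stochastic clock, by contrast, the backward operator uses $\E[\alpha\Xiop\mid\text{coarse info}]$ while the forward law of $S$ carries the mixture of $\alpha\Xiop$ over the finer clock information. The two then need not be adjoint, which gives a concrete mismatch.

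I expect the main obstacle to be the second step. Domains, boundary terms at the edges of $\Sdom$, and the numeraire and killing conventions must all be fixed so that the weak identity really pins down the adjoint. Failure of adjointness must also not be an artefact of choosing a bad domain. I will handle this by restricting to a common core that both operators preserve, and by stating the argument up to the conventions already flagged after Eq.~\eqref{eq:forward-adjoint}.
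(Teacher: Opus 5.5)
Your argument is essentially the paper's. Both take coherence to mean that valuation and state-price evolution are dual actions of one discounted kernel, so $\PairS{p}{P_{t,t+h}V}=\PairS{P_{t,t+h}^{*}p}{V}$ holds by Fubini. Expanding to first order in $h$ then gives the adjoint identity, and the second sentence of the proposition follows by contraposition. The paper's short-step expansion is the clean form of your ``differentiate in $t$'' step, because it applies directly to arbitrary admissible $p,V$ rather than only to pairs lying on forward and backward flows. Your core/density step is unnecessary, since the appendix defines $\Ad(\mathfrak U)$ by the weak pairing itself.

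One caution on your hidden-clock illustration. Under the usual regularity, Gy\"ongy's Markovian projection makes the marginal law of $S_t$ evolve exactly by the adjoint of the backward operator with coefficient $\E^{\Q}[\alpha\Xiop\mid S_t=S]$. So the failure is not a mismatch of marginals. It is that the density of $S$ alone is not closed under one fixed operator, so two-time transitions and conditional prices differ. This is why the paper stresses that its hidden-clock remark is not an impossibility result for Markovian projection.
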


\begin{proof}[Proof outline] \label{proof:adjoint-realoutline}
A one-state risk-neutral pricing representation requires a single projected transition mechanism. Acting backward on claims this mechanism gives a valuation semigroup and its infinitesimal operator $\Bop_t^U$; acting forward on state laws it gives the density evolution and its operator $\Gop_t^U$. Consistency of valuation and transition probabilities requires the bilinear pairing between densities and claims to be preserved, up to the usual discounting and cash-flow terms. Differentiating this pairing gives Eq.~\eqref{eq:adjoint-real-condition}. If the projected backward coefficient and the projected forward law are not adjoints, the representation prices claims under one calendar-time mechanism while evolving probabilities under another. Such a construction does not define a coherent one-state pricing model.
\end{proof}

The expanded form of the proof outlined above is provided as a consistency theorem in \ref{app:adjoint-consistency}.

\begin{remark}[A hidden-clock counterexample]
Suppose the clock has an unobserved state $Z_t$. Over a short calendar
interval $[t,t+\Delta t]$, a one-state backward projection may use the coefficient
\begin{equation}
 \Sigma_{\Delta t}^2(S,t)
 =
 \frac{1}{\Delta t}
 \E^\Q\!\left[
   \int_{U(t)}^{U(t+\Delta t)}
   \Xiop(S,u;\Delta)\dd u
   \,\middle|\, S_t=S
 \right].
\end{equation}
If $U$ is absolutely continuous this reduces formally to
$\E^\Q[\dot U_t\Xiop(S,U_t;\Delta)\mid S_t=S]$ as $\Delta t \downarrow0$.
Thus the displayed coefficient can give a plausible scalar pricing PDE.
But if the law of future clock increments depends on $Z_t$ beyond $S_t$,
then the density of $S_t$ alone is not closed under the same operator.
If the hidden state is relevant to the full pricing mechanism, the state should be enlarged to $(S_t,Z_t)$; a scalar one-state projection may still match marginal variance or selected option prices without defining the same full pricing semigroup.
\end{remark}

A maturity-level object can now be written as
\begin{equation}
 \V_{t,T}^{\mathfrak U}=\E^\Q\left[\int_{U(t)}^{U(T)}\Xiop^\Q(S(u),u;\Delta)\dd u\mid\mathcal F_t\right],
 \label{eq:NUT-IV}
\end{equation}
provided that the expectation is taken under a pricing measure and clock projection for which the induced forward--backward pair is adjoint-real. When $U(t)$ is deterministic, \eqref{eq:NUT-IV} reduces, for a state-dependent kernel, to
\[
 \E^\Q\left[\int_t^T \alpha(s)\Xiop^\Q(S_s,U(s);\Delta)\dd s\mid\mathcal F_t\right],
\]
where $S_s$ is the projected state process. The simpler pointwise benchmark $\int_t^T\alpha(s)\Xiop^\Q(S,U(s);\Delta)\dd s$ is recovered when the state is frozen, the kernel is state independent, or the expression is used as a deterministic local-variance evaluation at a fixed state. If several elements of $\Ad(\mathfrak U)$ survive and induce non-equivalent valuations, non-unique time remains as incompleteness: prices can depend on the chosen adjoint-real projection or on the chosen pricing measure over clock risk. If $\Ad(\mathfrak U)$ is empty for the chosen state space, the state space must be enlarged or the projection class rejected \cite{AngstmannGebbie2026NonUniqueTime}.

\section{Relation to local volatility, time changes, and projection}

The deterministic-clock PDE in Eq.~\eqref{eq:pde} has the same algebraic form as a local-volatility pricing equation. The interpretation is different. In a standard local-volatility model the coefficient is specified directly in calendar time or inferred from option prices. Here it is the projection of an operational reaction-boundary cumulant, and the operational kernel can be discussed before any calendar-time clock is selected.

The asymptotic form of $\Xiop$ is what makes the adjoint condition concrete. Non-unique time does not change the Green-function cumulant itself: it changes how the scale-dependent kernel, with its explicit $\Delta^{-\gamma}\Fcal_\gamma(\nu\Delta)$ dependence, is sampled and conditioned in calendar time. Thus two clocks can agree on a scalar projected variance at a point while implying different forward laws once hidden activity, filtration, or clock-risk information is taken into account. The local-volatility equation should therefore be read as an admissible projection of the asymptotic kernel, not as a primitive replacement for the operational model.

The construction is also distinct from a standard time-changed process. A standard time-change model specifies a deterministic or stochastic clock and then studies the induced calendar-time process \cite{Bochner1949,Clark1973,CarrWu2004}. Here the clock is part of the modelling ambiguity. The family $\mathfrak U$ contains candidate projections, whereas $\Ad(\mathfrak U)$ contains only those projections that preserve the forward--backward adjoint relation needed for pricing. This places the clock choice closer to a model-uncertainty problem than to a purely parametric calibration problem \cite{DenisMartini2006,Cont2006}.

Finally, the adjoint-real condition is not the same as a Markovian projection or marginal mimicking result. Markovian projection asks whether a possibly non-Markovian calendar-time process can be represented by a Markov process with matching marginal distributions \cite{Gyongy1986,BrunickShreve2013}. The present question is earlier: given an operational kernel, which projections into calendar time produce mutually consistent forward and backward operators? The answer is not supplied by the variance projection alone. It is supplied by adjoint consistency. Thus the asymptotic form of the boundary kernel makes the candidate local variance explicit, while the adjoint condition provides a necessary consistency condition for this candidate to live on the proposed state space as a pricing model.

\section{Reaction-boundary volatility surfaces}
\label{sec:volatility-simulations}

The analytic closure above can also be used to illustrate how the structural reaction-boundary quantities appear after projection into a local-volatility-like surface. We consider stylised liquidity, resilience-response and signed-forcing profiles across log-moneyness and maturity, using the finite-scale operational variance derived in Sections~\ref{sec:green-function-cumulant}--\ref{sec:asymptotic-closure} and its deterministic calendar-time projection from Section~\ref{sec:clock-projection-pricing}. The six simulations below are organised as comparisons between liquidity-only, resilience-response-only and combined structural profiles in the low- and high-$z$ regimes. Each case is shown as a matched three-dimensional surface and contour pair so that the overall morphology and the corresponding level-set structure can be read together. The combined high-$z$ case remains the most complete numerical example. 

At each grid point we evaluate the operational kernel in Eq.~\eqref{eq:Xi}, using the asymptotic reaction-boundary closure derived in Section~\ref{sec:asymptotic-closure}. The deterministic activity-clock projection is then the one already given in Eq.~\eqref{eq:deterministic-localvol}. The simulations preserve the sequence used throughout the paper: the order-book quantities determine the operational variance first, and the activity clock determines how that variance is represented in calendar time.

For plotting we use the log-moneyness coordinate
\begin{equation}
    m=\ln(x/x_0),
    \qquad
    S=\ln x=\ln x_0+m,
    \label{eq:sim-log-moneyness}
\end{equation}
where $x_0$ is the reference price. The displayed quantity is
\begin{equation}
    \sigloc(m,T)
    =
    \left[
      \alpha(T)\,
      \Xiop\!\left(\ln x_0+m,U(T);\Delta\right)
    \right]^{1/2},
    \label{eq:sim-plotted-volatility}
\end{equation}
evaluated on a rectangular grid in log-moneyness $m$ and maturity $T$. The coordinate $m$ is used here to make the state dependence comparable with the familiar representation of option-volatility surfaces.

The two numerical regimes are the asymptotic limits already given in Eq.~\eqref{eq:F-asymp}. In the low-$z$ regime the leading response is independent of $z$ when $\gamma$ is fixed. The low-$z$ resilience-response-only calculation is therefore deliberately diagnostic: variation in $z(m)$ alone cannot generate leading-order state dependence. In the high-$z$ regime the response instead scales as $K_\gamma^{(0)}z^{\gamma-1}$. Since $0<\gamma<1$, a smaller value of $z=\nu\Delta$ raises this response multiplier. The comparison between the two regimes therefore separates effects that originate in the liquidity and forcing profiles from those generated directly by resilience.

For these illustrations we hold $\gamma$ constant and introduce a simple maturity-dependent deterministic activity profile,
\begin{equation}
    \alpha(T)
    =
    \alpha_0\left(
      1+a_{\mathrm{short}}e^{-T/\tau_\alpha}
    \right).
    \label{eq:sim-activity-profile}
\end{equation}
The state dependence is then introduced through the local liquidity slope $\cL(m)$, the dimensionless resilience-response argument $z(m)=\nu(m)\Delta$, and the effective signed-forcing intensity $A_{\mathrm{eff}}(m)$:
\begin{align}
    \cL(m)
    &=
    L_0\max\!\left(
      L_{\mathrm{floor}},
      1+L_{\mathrm{skew}}m+L_{\mathrm{smile}}m^2
    \right),
    \label{eq:sim-liquidity-profile}\\
    z(m)
    &=
    z_{\mathrm{regime}}
    e^{\nu_{\mathrm{skew}}m},
    \label{eq:sim-z-profile}\\
    \Aeff(m)
    &=
    A_{\mathrm{eff},0}
    e^{-A_{\mathrm{stress}}m}.
    \label{eq:sim-forcing-profile}
\end{align}
Positive $L_{\mathrm{skew}}$ reduces liquidity on the downside, positive $\nu_{\mathrm{skew}}$ lowers the resilience-response argument $z(m)$ on the downside, and positive $A_{\mathrm{stress}}$ increases the effective signed-forcing scale there. These profiles are phenomenological: their purpose is to switch the structural regimes on and off in a controlled way rather than to represent estimated market quantities.

Table~\ref{tab:sim-parameter-ranges} collects the dimensionless baseline values and ranges used in the numerical comparison. They set the scale for comparison between profiles and response regimes and should not be read as empirical estimates.

\begin{table}[!tbp]
\centering
\scriptsize
\caption{Dimensionless baseline values and simulation ranges.}
\label{tab:sim-parameter-ranges}
\begin{tabular}{p{0.08\linewidth}p{0.05\linewidth}p{0.05\linewidth}p{0.05\linewidth}p{0.55\linewidth}}
\toprule
Para. & Base & Min. & Max. & Role \\
\midrule
\(\gamma\) & 0.55 & 0.30 & 0.80 & Long-memory; avoids endpoints 0 and 1. \\
\(\Delta\) & 1.0 & 0.5 & 2.0 & Operational coarse-graining scale. \\
\(A_{\mathrm{eff},0}\) & 1.0 & 0.5 & 2.0 & Effective signed-forcing intensity scale. \\
\(L_0\) & 1.0 & 0.75 & 1.5 & Baseline liquidity slope. \\
\(z_{\mathrm{low}}\) & 0.05 & 0.01 & 0.10 & Low-\(z\) response regime. \\
\(z_{\mathrm{high},0}\) & 10.0 & 5.0 & 25.0 & High-\(z\) response regime. \\
\(\alpha_0\) & 1.0 & 0.5 & 2.0 & Baseline deterministic activity. \\
\(a_{\mathrm{short}}\) & 0.25 & 0.0 & 0.75 & Short-maturity activity lift. \\
\(\tau_\alpha\) & 0.50 & 0.25 & 1.0 & Activity decay scale in years. \\
\(L_{\mathrm{skew}}\) & 0.80 & 0.0 & 1.5 & Lower downside liquidity when positive. \\
\(L_{\mathrm{smile}}\) & 1.50 & 0.0 & 3.0 & Liquidity variation from the centre. \\
\(\nu_{\mathrm{skew}}\) & 1.00 & 0.0 & 2.0 & Weaker downside resilience when \(>0\). \\
\(A_{\mathrm{stress}}\) & 0.75 & 0.0 & 1.5 & Stronger downside forcing in the combined profile. \\
\(L_{\mathrm{floor}}\) & 0.20 & 0.10 & 0.50 & Positive floor for the liquidity profile. \\
\bottomrule
\end{tabular}
\end{table}

We first isolate the liquidity channel. Figures~\ref{fig:sim-liquidity-low-z} and~\ref{fig:sim-liquidity-high-z} use the same state-dependent liquidity profile while holding the other structural parameters fixed. Their comparison therefore shows how the inverse-square liquidity dependence enters the projected surface in the two response regimes.

\begin{figure}[!tbp]
\centering
\begin{minipage}[t]{0.48\linewidth}
    \centering
    \includegraphics[width=\linewidth]{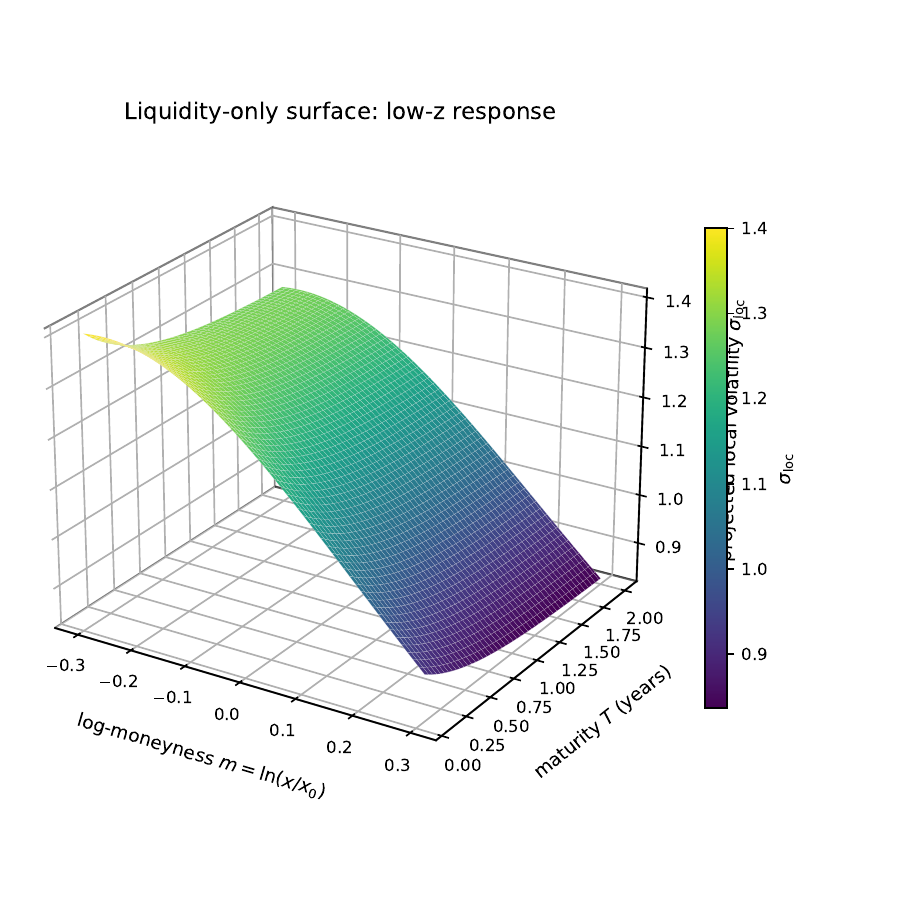}
    \textnormal{(a)}
\end{minipage}
\hfill
\begin{minipage}[t]{0.48\linewidth}
    \centering
    \includegraphics[width=\linewidth]{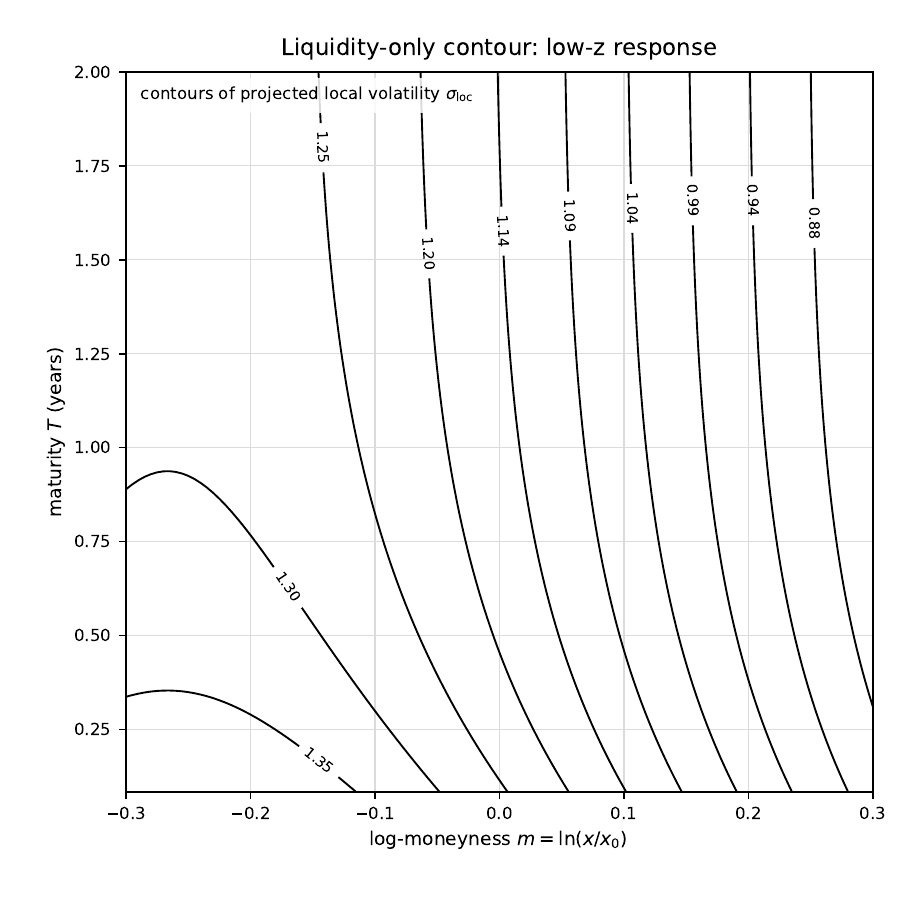}
    \textnormal{(b)}
\end{minipage}
\caption{Liquidity-only simulation in the low-$z$ response regime. (a) Projected structural volatility surface and (b) the corresponding contour representation. The only structural source of state dependence is the liquidity profile $\cL(m)$, which enters the closure through the inverse-square factor $1/\cL(m)^2$. Since the low-$z$ response approximation in Eq.~\eqref{eq:F-asymp} is independent of $z$ at fixed $\gamma$, the response channel does not contribute leading-order state dependence. This shows that the surface morphology in this regime is generated by the inverse-square liquidity together with the maturity-dependent activity profile.}
\label{fig:sim-liquidity-low-z}
\end{figure}

\begin{figure}[!tbp]
\centering
\begin{minipage}[t]{0.48\linewidth}
    \centering
    \includegraphics[width=\linewidth]{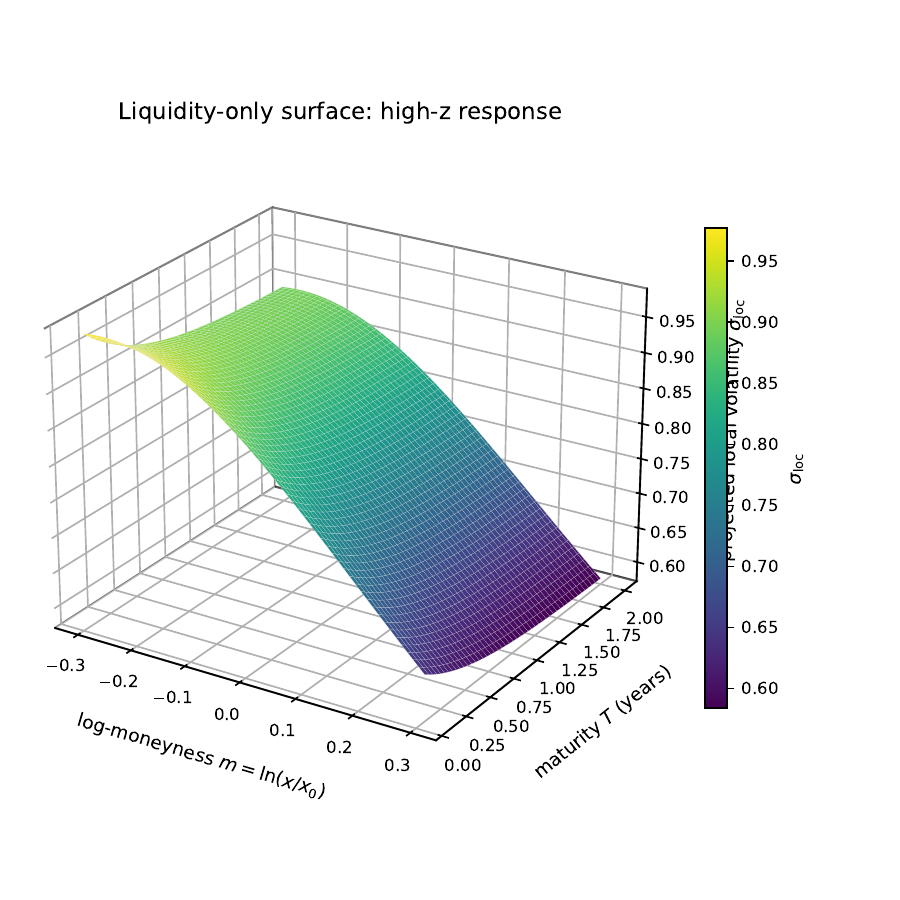}
    \textnormal{(a)}
\end{minipage}
\hfill
\begin{minipage}[t]{0.48\linewidth}
    \centering
    \includegraphics[width=\linewidth]{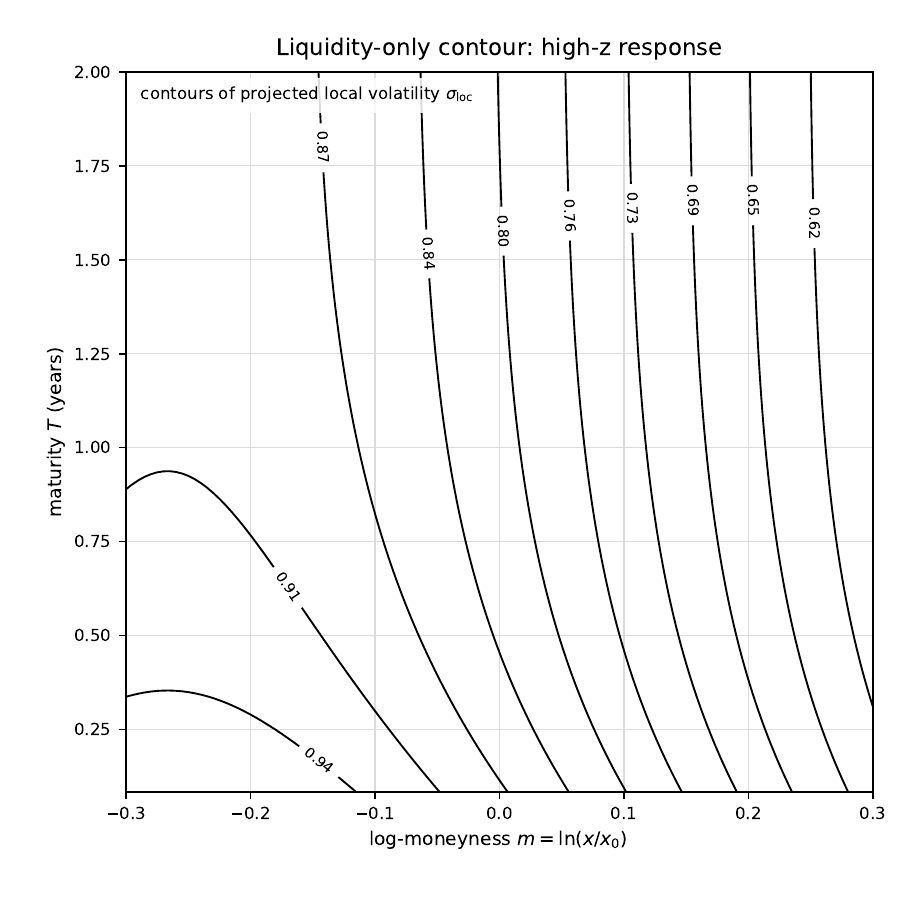}
    \textnormal{(b)}
\end{minipage}
\caption{Liquidity-only simulation in the high-$z$ response regime. (a) Projected structural volatility surface and (b) the corresponding contour representation. The liquidity profile remains the only source of state dependence because $z$ is held constant in this profile. Comparison with Fig.~\ref{fig:sim-liquidity-low-z} isolates the change in response-regime scale while preserving the same liquidity-driven surface shape. We see that changing the response regime changes the volatility scale without introducing a new source of state-dependent morphology.}
\label{fig:sim-liquidity-high-z}
\end{figure}

\FloatBarrier

We next hold liquidity and forcing fixed and vary only the resilience-response profile. Figures~\ref{fig:sim-resilience-low-z} and~\ref{fig:sim-resilience-high-z} make the asymptotic distinction in Eq.~\eqref{eq:F-asymp} visible. The low-$z$ case provides a diagnostic control, whereas the high-$z$ case allows the state dependence of $z(m)$ to enter directly through the response multiplier.

\begin{figure}[!tbp]
\centering
\begin{minipage}[t]{0.48\linewidth}
    \centering
    \includegraphics[width=\linewidth]{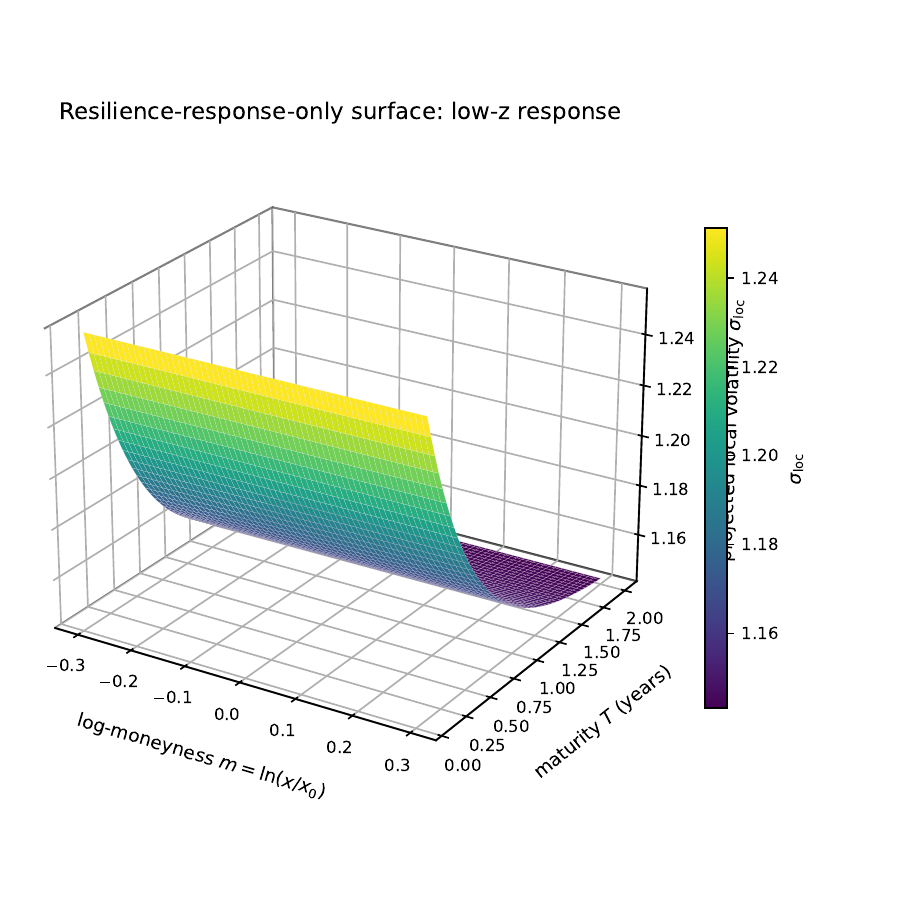}
    \textnormal{(a)}
\end{minipage}
\hfill
\begin{minipage}[t]{0.48\linewidth}
    \centering
    \includegraphics[width=\linewidth]{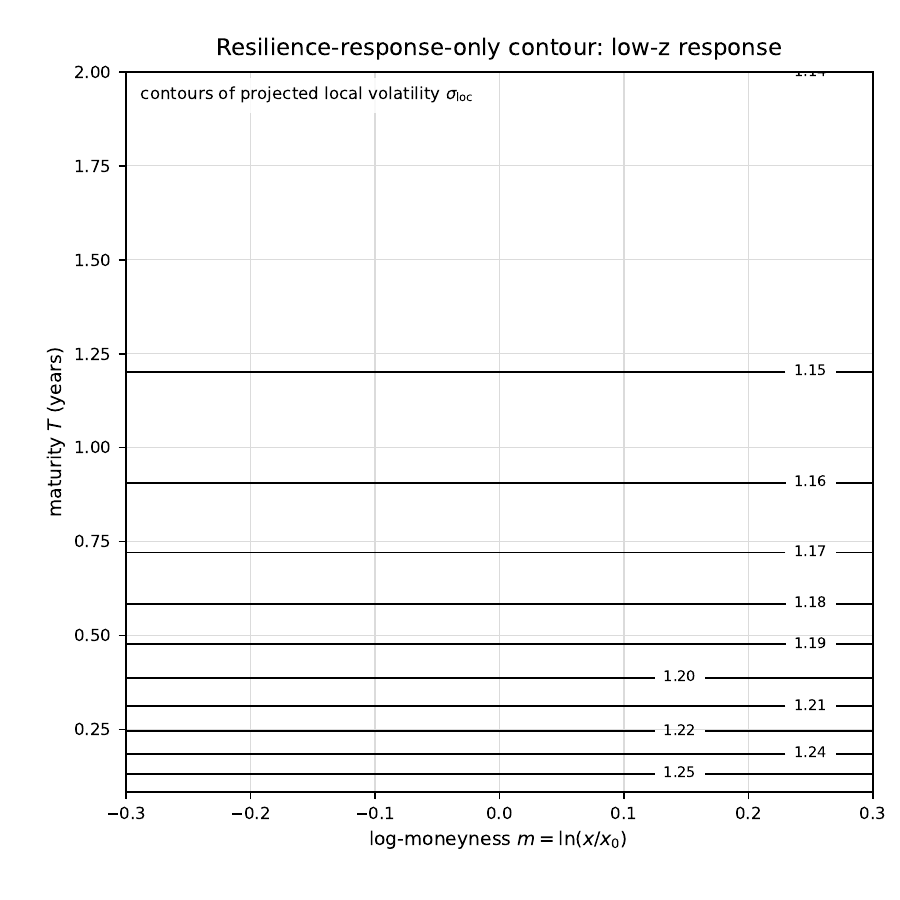}
    \textnormal{(b)}
\end{minipage}
\caption{Resilience-response-only simulation in the low-$z$ response regime. (a) Projected structural volatility surface and (b) the corresponding contour representation. The profile varies $z(m)=\nu(m)\Delta$, but the leading low-$z$ approximation in Eq.~\eqref{eq:F-asymp} does not depend on $z$ at fixed $\gamma$. The figure is therefore a diagnostic control: it shows the maturity activity effect without additional leading-order response-driven state dependence. We see explicitly that variation in resilience alone does not generate leading-order state dependence in the low-$z$ approximation.}
\label{fig:sim-resilience-low-z}
\end{figure}

\begin{figure}[!tbp]
\centering
\begin{minipage}[t]{0.48\linewidth}
    \centering
    \includegraphics[width=\linewidth]{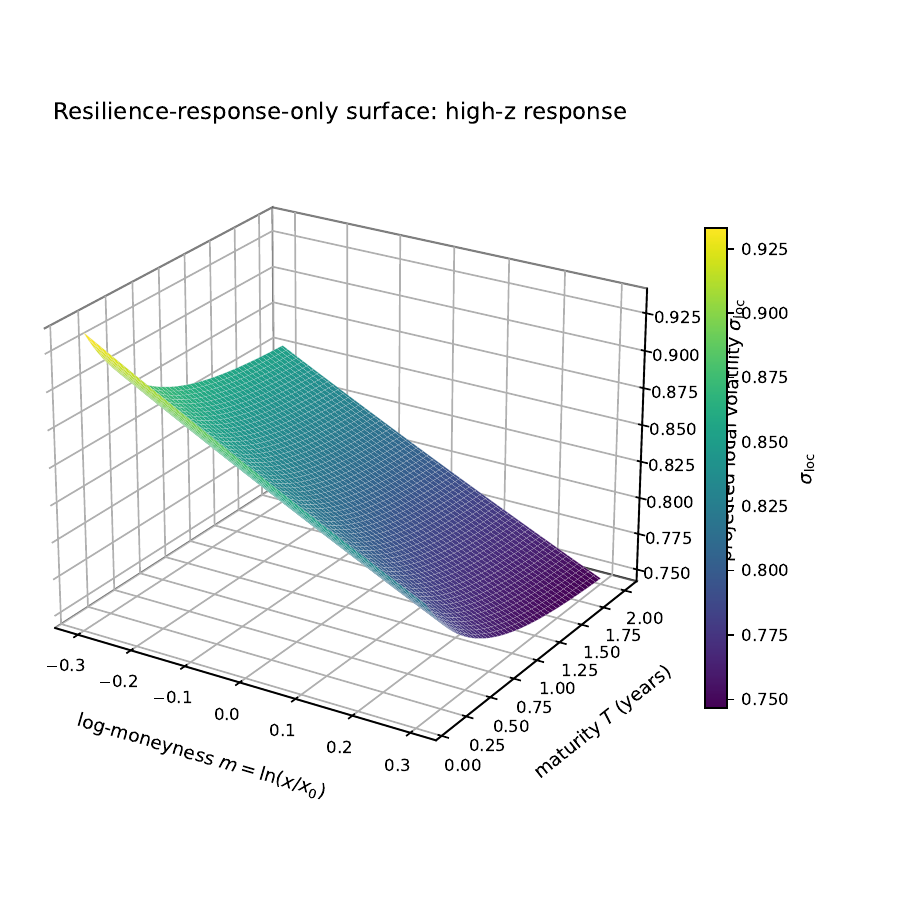}
    \textnormal{(a)}
\end{minipage}
\hfill
\begin{minipage}[t]{0.48\linewidth}
    \centering
    \includegraphics[width=\linewidth]{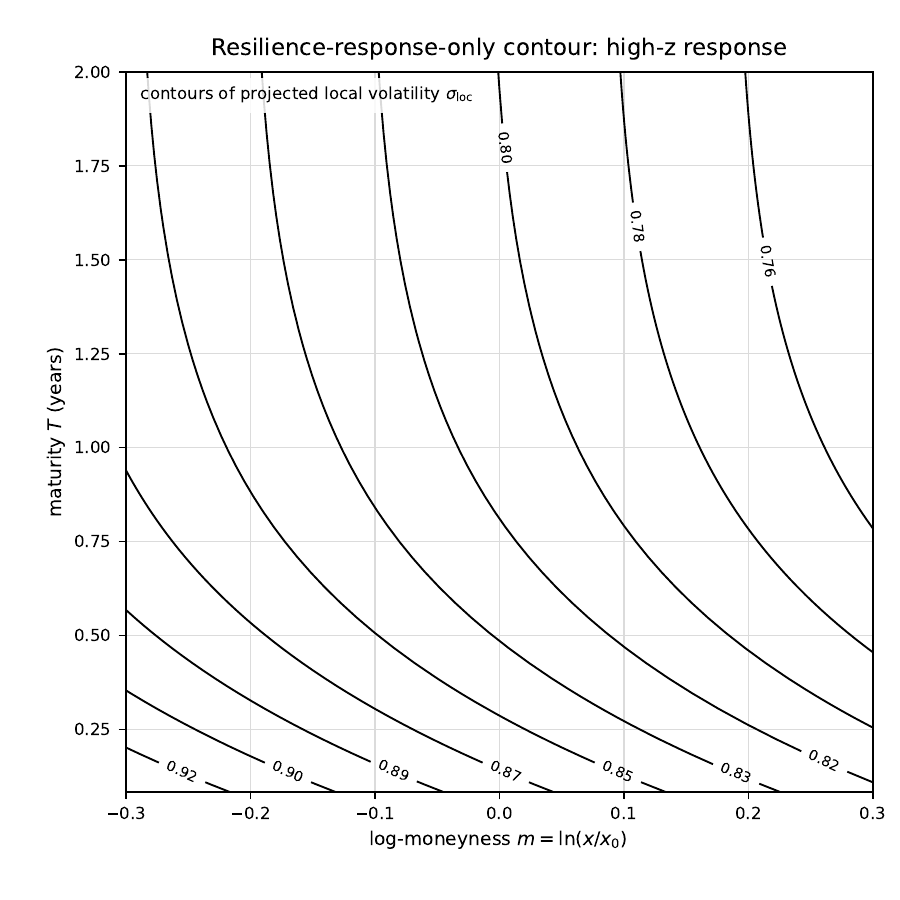}
    \textnormal{(b)}
\end{minipage}
\caption{Resilience-response-only simulation in the high-$z$ response regime. (a) Projected structural volatility surface and (b) the corresponding contour representation. Here $z(m)$ enters directly through $K_\gamma^{(0)}z^{\gamma-1}$. With $\gamma<1$, lower downside $z(m)$ increases the response multiplier, producing state dependence even though liquidity and forcing are held fixed. This shows how resilience-response variation becomes a direct source of surface morphology in the high-$z$ regime.}
\label{fig:sim-resilience-high-z}
\end{figure}

\FloatBarrier

The final comparison restores the liquidity and forcing profiles together with the resilience-response channel. Figures~\ref{fig:sim-combined-low-z} and~\ref{fig:sim-combined-high-z} show the two combined cases. In the low-$z$ branch the leading response remains independent of $z$, whereas the high-$z$ branch restores the direct resilience-response contribution.

\begin{figure}[!tbp]
\centering
\begin{minipage}[t]{0.48\linewidth}
    \centering
    \includegraphics[width=\linewidth]{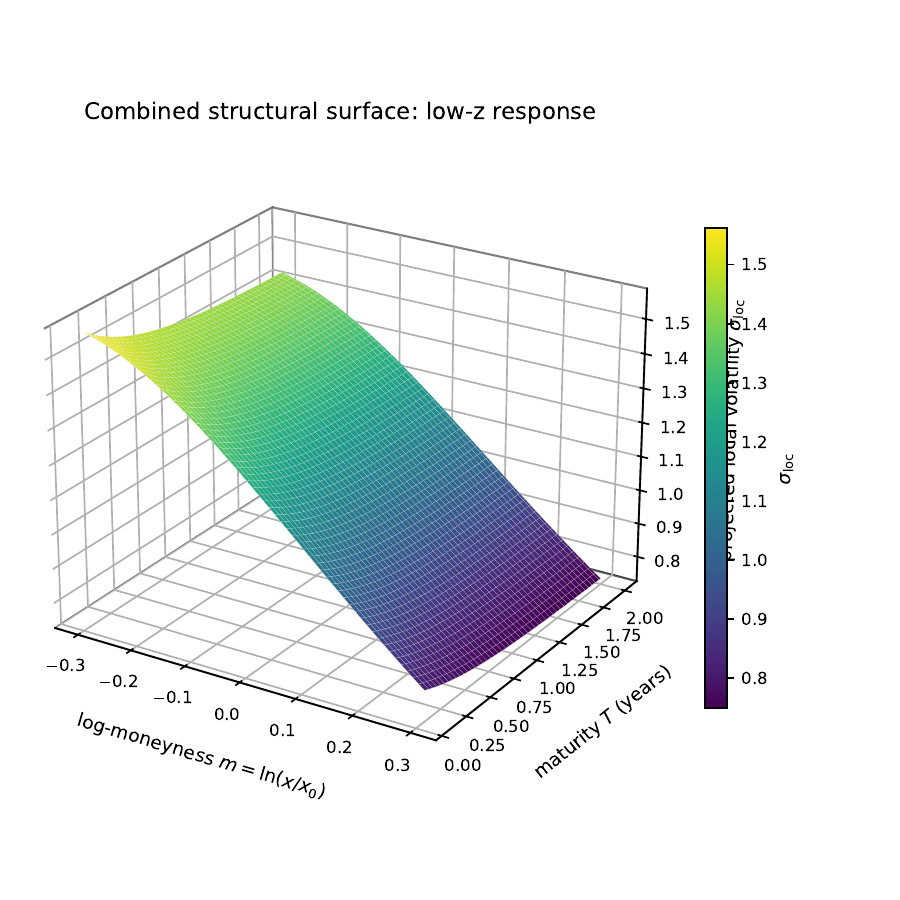}
    \textnormal{(a)}
\end{minipage}
\hfill
\begin{minipage}[t]{0.48\linewidth}
    \centering
    \includegraphics[width=\linewidth]{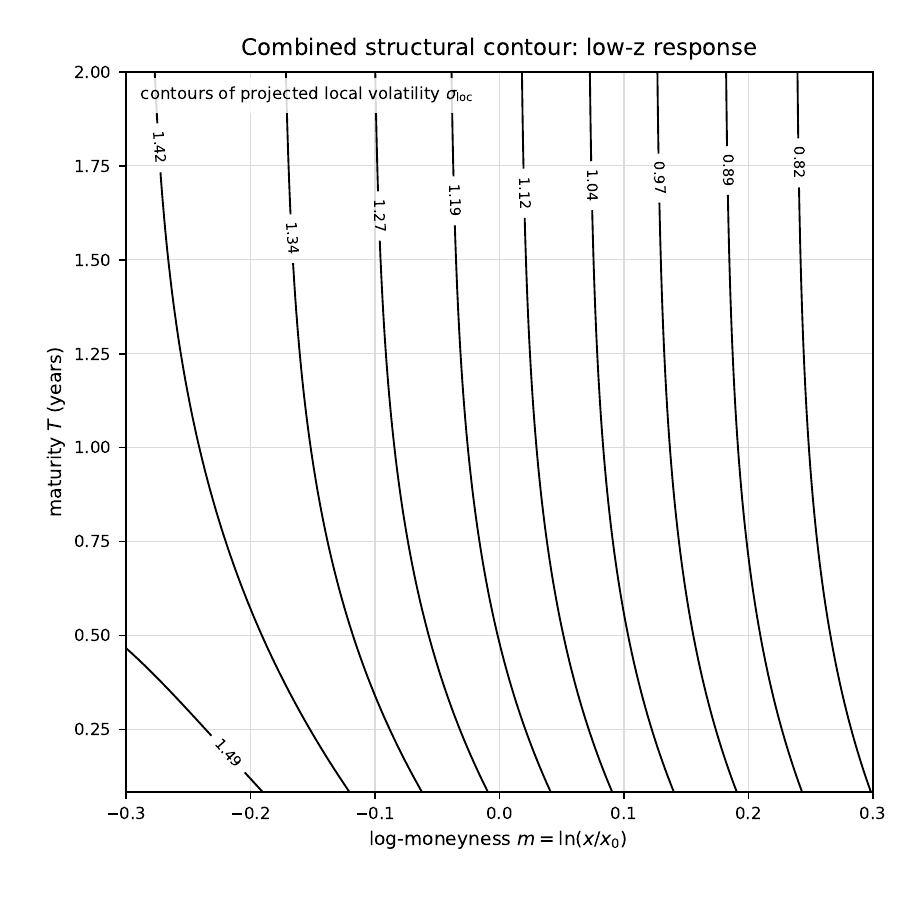}
    \textnormal{(b)}
\end{minipage}
\caption{Combined structural simulation in the low-$z$ response regime. (a) Projected structural volatility surface and (b) the corresponding contour representation. The surface combines liquidity variation and downside forcing while the leading response approximation in Eq.~\eqref{eq:F-asymp} remains independent of $z$ at fixed $\gamma$. The resulting shape should be interpreted as a stylised liquidity-and-forcing effect rather than as a calibrated volatility surface. We see that combining these two active channels is already sufficient to produce non-trivial state- and maturity-dependent surface morphology before resilience enters at leading order.}
\label{fig:sim-combined-low-z}
\end{figure}

\begin{figure}[!tbp]
\centering
\begin{minipage}[t]{0.48\linewidth}
    \centering
    \includegraphics[width=\linewidth]{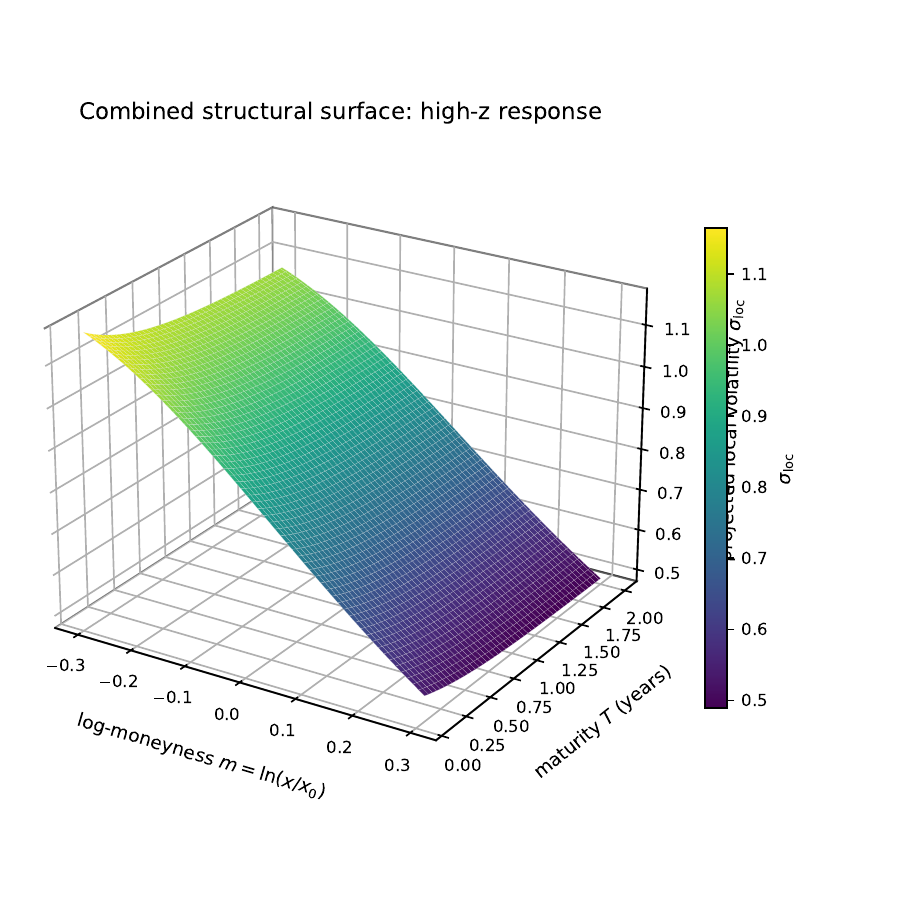}
    \textnormal{(a)}
\end{minipage}
\hfill
\begin{minipage}[t]{0.48\linewidth}
    \centering
    \includegraphics[width=\linewidth]{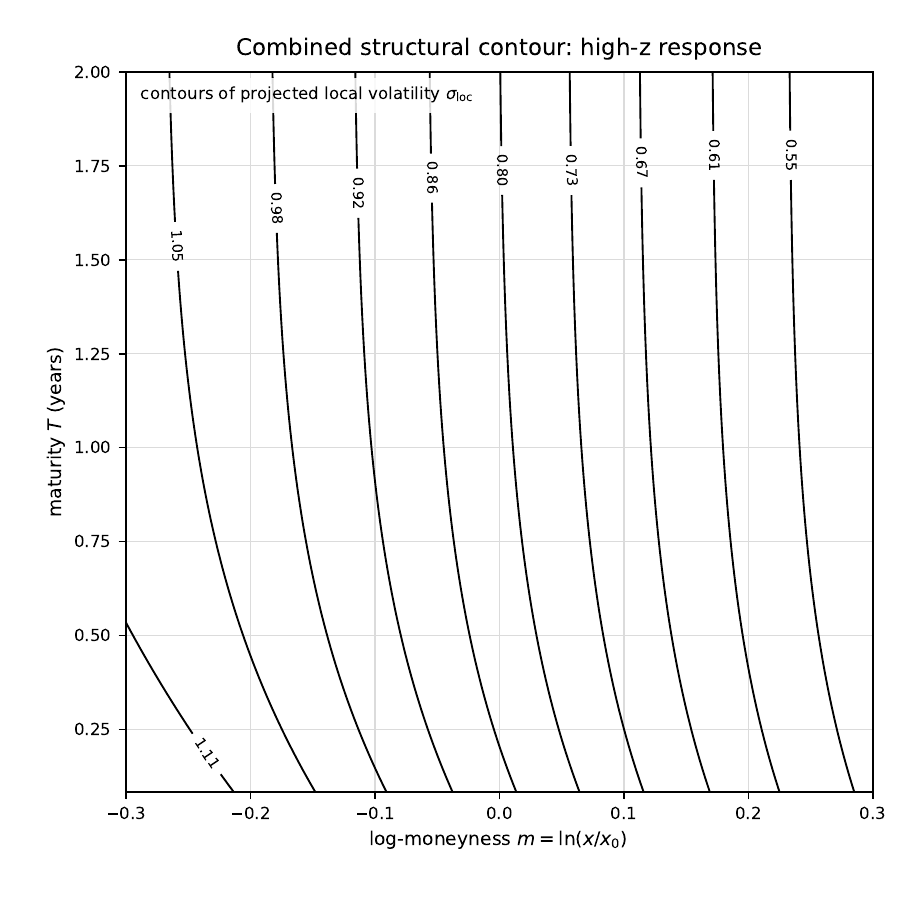}
    \textnormal{(b)}
\end{minipage}
\caption{Combined structural simulation in the high-$z$ response regime. (a) Projected structural volatility surface and (b) the corresponding contour representation. This profile combines the inverse-square liquidity channel, the response multiplier $K_\gamma^{(0)}z^{\gamma-1}$, and a stylised downside forcing profile. The two panels show the same state- and maturity-dependent morphology in complementary representations. We see that the three structural channels acting together produce the strongest downside asymmetry of the simulated cases. This is the closest of the examples to the intended phenomenological comparison with a calibrated volatility surface, while remaining a mechanism simulation rather than an empirical fit.}
\label{fig:sim-combined-high-z}
\end{figure}

\FloatBarrier

Taken together, the six paired surface--contour results separate the three structural channels before combining them. The liquidity-only cases show the inverse-square boundary response, the resilience-only comparison shows where the response factor begins to generate state dependence, and the combined cases show how these effects interact with the stylised forcing profile. Figure~\ref{fig:sim-combined-high-z} remains the most complete example considered here. 

Under the deterministic activity-clock benchmark and a chosen pricing measure $\Q$, the simulated operational kernel enters the calendar-time coefficient through Eq.~\eqref{eq:deterministic-localvol}. The numerical construction therefore illustrates the same ordering used in the analytic argument: reaction-boundary quantities first determine an operational variance kernel, after which the activity clock projects that kernel into calendar time.

This also gives the direct connection with the broader operational-time lattice construction in \cite{AngstmannGebbieBSMFromLattice2026}. There the finite-scale reaction-boundary variance enters the general local kernel before deterministic clock projection, whereas the present paper provides the detailed Green-function cumulant, finite-cutoff and asymptotic calculation from which that coefficient is obtained. The numerical surfaces above demonstrate this structural branch visually.

The simulations are therefore illustrative. They do not estimate $\cL(m)$, $\nu(m)$, $\Aeff(m)$, $\gamma(m)$ or $\alpha(T)$ from market data, and the plotted object should not be read as a calibrated or arbitrage-free Dupire local-volatility surface. Rather, the figures show which declared structural channels and there associated parameters can generate different state- and maturity-dependent surfaces.

\section{Discussion and limitations}

Here we provide a simple analytic bridge from microstructure response to local volatility. It uses the same Green-function response that explains concave impact \cite{toth2011anomalous,Mastromatteo2014, Donier2015}; but computes a second cumulant rather than a deterministic impact path. The microstructural input also sits in the broader order-flow-memory and latent-liquidity literature, where persistent signed flow, hidden-order splitting, slow liquidity digestion, and impact invariance are treated as structural market features rather than exogenous volatility factors \cite{LilloMikeFarmer2005,Bouchaud2004,bouchaud2009markets,Benzaquen2018}. The deterministic projection gives a local-volatility coefficient in the usual pricing PDE. The non-unique-time extension leaves the operational kernel unchanged and moves model uncertainty into the projection layer. The construction should therefore be read as a structural closure rather than an empirical volatility model. 

What is derived is the form of a mesoscopic operational variance kernel once the local book slope, resilience, signed-forcing intensity, memory exponent, and operational coarse-graining scale are specified. What is assumed is local linearity of the latent book, frozen response coefficients over the increment window, locally stationary long-memory signed forcing, finite operational coarse graining, and a controlled low-frequency asymptotic regime. The key limitations then follow as: (i) the homogeneous limit is not a calibrated data model, (ii) the physical order-flow statistics are not identified directly with risk-neutral quantities, and (iii) it is not given that the finite-cutoff corrections are negligible in every empirical regime.

The relation to the broader non-unique-time discussion is specific. The present paper does not require a full operational-time lattice construction, a full meta-order impact model, or a multivariate correlation model. It uses only the local linear response of the boundary, the forcing covariance, and the clock projection. The broader motivation for treating clock choice as a source of incompleteness is discussed in \cite{AngstmannGebbie2026NonUniqueTime}; but the derivation here is self-contained.

We do not assert that the physical signed-order-flow covariance is itself the risk-neutral covariance, nor that the finite-cutoff asymptotic prefactor is universal across assets or sampling schemes. Nor do we claim that every long-memory or fractional extension remains a scalar local-volatility model. If the projected variance retains its own memory state, the Dupire coefficient is a marginal projection of richer variance dynamics rather than the primitive boundary variance itself. 

Instead, the paper identifies the structural objects that must be transformed, calibrated, or projected before the kernel can be used for pricing. In this sense the formula is most useful as a disciplined decomposition of local volatility into liquidity response, signed-flow memory, activity, and measure choice, and makes concrete the role of time non-uniqueness in discrete event-driven systems. 


\section{Conclusion}

In summary, what we have derived is a coarse-grained local variance coefficient from the linear response of a reaction boundary in a latent order book. The key result is Eq.~\eqref{eq:operational-result}. It expresses the variance of boundary increments in terms of directly interpretable structural quantities: the effective signed-forcing intensity $\Aeff$, the local liquidity slope $\cL$, the resilience $\nu$, the long-memory exponent $\gamma$, and the operational coarse-graining scale $\Delta$.

In simple physical terms, stronger signed forcing from sequences of buy and sell orders will increase the boundary variance, while a steeper latent book suppresses boundary motion through the inverse-square factor $1/\cL^2$ replenishment. Long-memory order flow controls the scale dependence through $\Delta^{-\gamma}$, and the resilience of the order-book enters through the response function $\Fcal_\gamma(\nu\Delta)$, filtering the contribution of persistent forcing according to the ratio between the book relaxation scale and the measurement scale. 

Equation~\eqref{eq:deterministic-localvol} then gives the deterministic-clock benchmark: calendar-time local variance is the operational boundary variance multiplied by the activity rate $\alpha(t)$. Thus activity does not alter the Green-function response kernel; it determines how rapidly operational fluctuations are sampled in calendar time. 

Finally, Eq.~\eqref{eq:homogeneous-result} gives the homogeneous closed form, where the structural quantities are frozen. This limit is not meant to be a complete empirical model, but it is the clearest summary of the mechanism: activity amplifies variance, signed forcing supplies fluctuations, liquidity absorbs them quadratically, resilience damps persistent memory, and the memory exponent determines how the observed variance changes with coarse graining. The non-unique-time extension preserves this operational cumulant and moves incompleteness, or clock uncertainty, into the projection from operational to calendar time.

The mathematical message is simple. The Green-function calculation determines an operational variance kernel under explicit local and asymptotic assumptions. The local-volatility pricing interpretation then requires a clock and a pricing measure. If the clock is unique and deterministic, the result reduces to an activity-rescaled local-volatility coefficient. If the clock is non-unique, the admissible projections must preserve the adjoint relation between valuation and density evolution. This is where the microstructure calculation meets mathematical finance: the order book supplies the operational cumulant, but pricing requires an adjoint-real calendar-time representation.

\section*{Code and reproducibility}

The simulations and original figure outputs are archived in the versioned v1.0.0 reproducibility release \cite{GebbieAngstmann2026RBVSoftware}. The figures used here are square-format renderings of the same calculations and parameter settings.

\section*{Acknowledgments}
During preparation of this manuscript, the authors used a generative-AI tool for language editing and consistency checks. The authors reviewed and edited all outputs and take full responsibility for the content of the manuscript.

\bibliographystyle{elsarticle-harv}
\bibliography{AdjointProjection-v1.8}

\appendix
\section{Finite-cutoff spectral form}
\label{app:finite-cutoff}
The asymptotic closure in Eq.~\eqref{eq:operational-result} uses the low-frequency, zero-cutoff approximation to the regularised spectral cumulant. This appendix records the finite-cutoff object from which that approximation is taken.

With
\begin{equation}
 g_{\nu,D}^{(\tau_0)}(r)=\frac{\exp[-\nu(r+\tau_0)]}{\sqrt{4\pi D(r+\tau_0)}}\mathbf 1_{r\ge0},
\end{equation}
the exact filtered coefficient at operational scale $\Delta$ is
\begin{equation}
 a_{u,\tau_0}^{(\Delta)}=
 \frac{1}{\Delta\cL_u^2}\frac{1}{2\pi}
 \int_{-\infty}^{\infty}4\sin^2\left(\frac{\omega\Delta}{2}\right)
 \left|\wh g_{\nu,D}^{(\tau_0)}(\omega)\right|^2
 S_m^{(\tau_0)}(\omega)\dd\omega,
 \label{eq:finite-cutoff-cumulant}
\end{equation}
where $S_m^{(\tau_0)}$ is the spectrum associated with
\begin{equation}
 C_m^{(\tau_0)}(\tau)=A_m(|\tau|+\tau_0)^{-\gamma}.
\end{equation}
The zero-cutoff calculation replaces
\begin{equation}
 \left|\wh g_{\nu,D}^{(\tau_0)}(\omega)\right|^2
 \quad\text{by}\quad
 \frac{1}{4D\sqrt{\nu^2+\omega^2}}
\end{equation}
and
\begin{equation}
 S_m^{(\tau_0)}(\omega)
 \quad\text{by}\quad
 A_m C_\gamma |\omega|^{\gamma-1}
\end{equation}
in the low-frequency regime relevant to $\Delta/\tau_0\gg1$. Under the change of variables $\xi=\omega\Delta$, these approximations give Eq.~\eqref{eq:operational-result}. Finite $\tau_0$ modifies the prefactor and the crossover at large $\nu\Delta$, but it does not change the leading long-memory scaling in the controlled regime where the increment scale is separated from the cutoff scale. This is deliberately stated at the spectral level. A calibrated finite-cutoff model would retain Eq.~\eqref{eq:finite-cutoff-cumulant} directly rather than replacing it by the asymptotic response function $\Fcal_\gamma$.

For the high-$z$ zero-cutoff prefactor, the response-selected frequency scale is of order $\nu$, so consistency with the low-frequency cutoff replacement additionally requires $\nu\tau_0\ll1$.

\section{Adjoint consistency for projected clocks}
\label{app:adjoint-consistency}

In Proposition \ref{prop:adjoint-real} we provide the clock consistency reality condition and then provide a short outline for the proof strategy. Here we prove the consistency assertion using the proposed kernel-duality argument for Markov transition mechanisms and their infinitesimal generators \cite{EthierKurtz1986,Dynkin1965MarkovProcessesI,Feller1971Volume2}. We do not prove existence or uniqueness of a pricing measure, but identify the minimal coherence condition required once an operational-time variance kernel is projected into calendar time.  A projected clock can be used as a one-state pricing representation only when the backward valuation operator for claims and the forward state-price operator for laws are induced by the same discounted, or killed, state-price kernel. This serves as the formal version of the adjoint-real consistency requirement used in the main text; and makes more explicit the operational-lattice and non-unique-time interpretations in \cite{AngstmannGebbieBSMFromLattice2026,AngstmannGebbie2026NonUniqueTime}.

\begin{definition}[Projected one-state pricing mechanism]
Fix a log-price domain $\Sdom\subseteq\mathbb R$, a calendar interval $[0,T]$, a pricing measure $\Q$, and a class $\mathfrak U$ of operational-to-calendar projections for an operational variance kernel $\Xiop(S,u;\Delta)$.  For $U\in\mathfrak U$, a projected one-state pricing mechanism is a two-parameter discounted, or killed, state-price kernel
\[
 \Kern^U_{s,t}(S,\dd Y),\qquad S,Y\in\Sdom,\qquad 0\leq s\leq t\leq T,
\]
on the log-price domain $\Sdom$.  Its backward action on claims is
\[
 (P^U_{s,t}V)(S)=\int_{\Sdom}V(Y)\Kern^U_{s,t}(S,\dd Y),
\]
and its forward action on state-price measures is the dual action of the same kernel.  Thus, whenever state-price densities exist with respect to the log-price coordinate $S$,
\[
 \PairS{p}{V}:=\int_{\Sdom}V(S)p(S)\dd S,
 \qquad
 \PairS{p}{P^U_{s,t}V}=\PairS{(P^U_{s,t})^*p}{V}.
\]
\end{definition}

\begin{definition}[Adjoint-real projection]
For the chosen log-price domain, information set, operator domains, boundary convention, and discounting/killing convention, define
\[
 \Ad(\mathfrak U)
 :=
 \left\{
 U\in\mathfrak U:
 \PairS{p}{\Bop_t^U V}=\PairS{\Gop_t^U p}{V}
 \text{ for all admissible } p,V
 \right\}.
\]
Here $\Bop_t^U$ is the backward claim generator and $\Gop_t^U$ is the forward state-price generator associated with the same projected mechanism.
\end{definition}

\begin{theorem}[Adjoint consistency as a necessary coherence constraint]
\label{thm:adjoint-consistency-v12}
Let $U\in\mathfrak U$.  Suppose that $U$ defines a coherent one-state projected pricing mechanism on the information set $\Ical_t^S$ through a single discounted, or killed, state-price kernel $\Kern^U_{s,t}$.  Suppose further that, on compatible domains $\Dclaim(t)$ and $\Ddens(t)$, the paired short-step expansions
\[
 P^U_{t,t+h}V=V+h\Bop^U_tV+o(h),\qquad
 (P^U_{t,t+h})^*p=p+h\Gop^U_tp+o(h)
\]
hold after pairing with admissible tests.  Then, for every admissible state-price density $p\in\Ddens(t)$ and every admissible claim or test function $V\in\Dclaim(t)$,
\begin{equation}
 \PairS{p}{\Bop^U_tV}=\PairS{\Gop^U_tp}{V}.
 \label{eq:adjoint-consistency-v12}
\end{equation}
Consequently $U\in\Ad(\mathfrak U)$ for the selected log-price domain, information set, and boundary/discount convention.  Hence a non-unique operational-to-calendar projection can serve as a coherent one-state pricing representation only if its backward claim operator and forward state-price operator are adjoints of the same projected mechanism.

In particular, if a projection supplies only a scalar coefficient
\[
 \Sigma_U^2(S,t)=\Pproj_U[\Xiop(S,u;\Delta);t]
\]
for a plausible backward pricing equation, but does not supply the corresponding forward state-price evolution through the same one-state mechanism, then it is a coefficient projection rather than a coherent one-state pricing model.
\end{theorem}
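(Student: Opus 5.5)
The plan is to derive Eq.~\eqref{eq:adjoint-consistency-v12} directly from the duality of the single kernel $\Kern^U_{s,t}$, by differentiating the preserved pairing at $h=0$. No structure specific to the operational kernel $\Xiop$ is needed beyond the existence of the mechanism. First fix $t$, an admissible $p\in\Ddens(t)$ and $V\in\Dclaim(t)$. By the definition of a projected one-state pricing mechanism, the backward action $P^U_{t,t+h}$ and the forward action $(P^U_{t,t+h})^*$ are the two actions of the same kernel. Fubini (Tonelli on positive and negative parts, using the integrability built into ``admissible'') then gives, for every $h>0$,
\[
 \PairS{p}{P^U_{t,t+h}V}=\int_{\Sdom}\int_{\Sdom}V(Y)\,\Kern^U_{t,t+h}(S,\dd Y)\,p(S)\dd S=\PairS{(P^U_{t,t+h})^*p}{V}.
\]
This identity is exact, with no expansion yet. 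It holds after the discounting/killing convention, because the killing is absorbed into $\Kern^U$ being sub-Markovian rather than Markovian.

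Second, subtract $\PairS{p}{V}$ from both sides and divide by $h$. The hypothesis says the short-step expansions hold after pairing with admissible tests. So the left side equals $\PairS{p}{\Bop^U_tV}+o(1)$ and the right side equals $\PairS{\Gop^U_tp}{V}+o(1)$. Letting $h\downarrow0$ yields Eq.~\eqref{eq:adjoint-consistency-v12}. Since $p,V$ were arbitrary admissible elements on the selected domain, information set $\Ical^S_t$ and boundary convention, the membership $U\in\Ad(\mathfrak U)$ follows by the definition of $\Ad(\mathfrak U)$ in this appendix. The ``consequently'' clause is then the contrapositive: if $U\notin\Ad(\mathfrak U)$, no single kernel on $\Sdom$ with $\Ical^S_t$ can generate both operators. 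Hence the projection is not a coherent one-state mechanism.

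Third, for the final ``in particular'' statement, I would argue that a scalar coefficient $\Sigma_U^2=\Pproj_U[\Xiop;t]$ fixes only $\Bop^U_t$, through Eq.~\eqref{eq:backward-operator} with $\alpha\Xiop^\Q$ replaced by $\Sigma_U^2$. A coherent mechanism would force the forward generator to be $(\Bop^U_t)^\dagger$ as in Eq.~\eqref{eq:forward-adjoint}, by the first two steps. If the projection's actual law of $S_t$ evolves differently, the mechanism is not one-state. The hidden-clock remark illustrates this: clock increments depend on $Z_t$, so the $S$-marginal is not closed under $(\Bop^U_t)^\dagger$.

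The main obstacle is justifying the limit interchange in the second step. The $o(h)$ terms must vanish after pairing uniformly enough, and the pairing must not acquire boundary fluxes at $\partial\Sdom$. I would handle this by stating it as the role of the compatible domains $\Dclaim(t),\Ddens(t)$, which are the hypothesis of the theorem. Any boundary terms that appear when one rewrites $\Gop^U_t$ as the formal adjoint via integration by parts are, by convention, included in the generators, so they do not need separate treatment in the proof.
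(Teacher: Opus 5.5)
Your proposal is correct and follows essentially the same route as the paper. You establish the exact finite-step duality $\PairS{p}{P^U_{t,t+h}V}=\PairS{(P^U_{t,t+h})^*p}{V}$ by Fubini--Tonelli on the common kernel $\Kern^U_{t,t+h}$, then subtract $\PairS{p}{V}$, divide by $h$ and let $h\downarrow0$ using the paired expansions. The coefficient-only clause follows because a scalar $\Sigma_U^2$ supplies no common kernel from which the argument can start. The ``limit interchange'' you flag is not an extra obstacle, since the theorem assumes the $o(h)$ expansions directly in paired form.
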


\begin{proof}
For a fixed projection $U$, write $P_{s,t}=P^U_{s,t}$.  Since the backward valuation of claims and the forward evolution of state-price laws are induced by the same discounted or killed kernel, the finite-step duality identity holds:
\[
 \PairS{p}{P_{s,t}V}=\PairS{P_{s,t}^*p}{V}.
\]
This is simply Fubini--Tonelli applied to
\[
 (P_{s,t}V)(S)=\int_{\Sdom}V(Y)\Kern^U_{s,t}(S,\dd Y),
\]
with the common kernel moved from the claim side to the state-price side.

Apply the identity over the short interval $[t,t+h]$.  Using the paired generator expansions gives
\[
 \PairS{p}{V+h\Bop^U_tV+o(h)}
 =
 \PairS{p+h\Gop^U_tp+o(h)}{V}.
\]
Cancel $\PairS{p}{V}$, divide by $h$, and let $h\downarrow0$.  The paired remainders vanish by assumption, yielding
\[
 \PairS{p}{\Bop^U_tV}=\PairS{\Gop^U_tp}{V}.
\]
This is precisely the adjoint-consistency condition in \eqref{eq:adjoint-consistency-v12}.  Therefore any coherent one-state projected pricing mechanism must belong to $\Ad(\mathfrak U)$.

The coefficient-only statement follows because the preceding argument cannot be started from a scalar variance map alone.  The proof requires a common finite-step state-price mechanism and its dual action on state-price laws.  A projected scalar variance may therefore be a useful local coefficient, but it is not by itself a coherent one-state pricing model.  This proves the necessary condition.
\end{proof}

\begin{remark}[Deterministic-clock benchmark]
If $U$ is deterministic and absolutely continuous with $\dot U(t)=\alpha(t)$, and if the risk-neutral log-price variance is
\[
 a_U^\Q(S,t)=\alpha(t)\Xiop^\Q(S,U(t);\Delta),
\]
then the killed log-price pricing generator may be written
\[
 \Bop^U_t f=b_U^\Q f_S+\frac12a_U^\Q f_{SS}-rf,
 \qquad
 b_U^\Q=r-q-\frac12a_U^\Q.
\]
Its formal adjoint is
\[
 \Gop^U_t p=-\partial_S(b_U^\Q p)+\frac12\partial_{SS}(a_U^\Q p)-rp,
\]
assuming compact support or boundary conditions that remove boundary terms.  In price coordinates this gives the corresponding killed local-volatility pricing equation
\[
 V_t+(r-q)xV_x+\frac12\alpha(t)\Xiop^\Q(\log x,U(t);\Delta)x^2V_{xx}-rV=0.
\]
\end{remark}

\begin{remark}[Hidden clocks and state enlargement]
If clock increments depend on hidden information not measurable with respect to $\Ical_t^S$, then a scalar projected variance may match a marginal variance or selected option prices without defining the same full one-state pricing semigroup as the hidden-clock model.  This is not an impossibility result for Markovian projection.  It says only that coherent one-state pricing requires a common one-state mechanism.  If the hidden state is relevant, the state should be enlarged, for example to $(S,Z)$ or to a sufficient filter state $(S,\pi)$.
\end{remark}

\begin{remark}[Kernel notation]
To distinguish the primitive operational transition kernel from the projected calendar-time state-price kernel we note that the kernel hierarchy is:
\[
K_{\Delta u}(S,\dd Y;u)
 \;\longrightarrow\;
K^{\Q}_{\Delta u}(S,\dd Y;u)
 \;\longrightarrow\;
{\mathcal K^U_{s,t}(S,\dd Y)} .
\]
Here $K_{\Delta u}$ is reserved for the local operational-time transitions, and $\mathcal K^U_{s,t}$ for the calendar-time projected, discounted or killed, state-price kernel used in the adjoint-consistency argument.
\end{remark}

In summary: the appendix proves the consistency condition underlying Proposition~1: a non-unique time projection is admissible as a one-state pricing representation only when valuation and state-price evolution are generated by the same discounted state-price kernel.  The result is a necessary condition, not a sufficiency theorem.  It supports the use of $\Ad(\mathfrak U)$ as an admissibility set, while leaving measure selection, empirical calibration, market completeness, and possible state enlargement as separate modelling choices.

\end{document}